\newtheorem{theorem}{Theorem}[section]
\newtheorem{lemma}[theorem]{Lemma}
\newtheorem{corollary}[theorem]{Corollary}
\newtheorem{remark}[theorem]{Remark}
\newtheorem{problem}[theorem]{Problem}
\newtheorem{definition}[theorem]{Definition}
\newcommand{\bigO}{\mathcal{O}}
\newcommand{\polylog}{\mathit{polylog}}
\newcommand{\set}[1]{\left\{#1\right\}}
\newcommand{\whp}{with high probability\xspace}
\newcommand{\prob}[1]{\mathsf{Pr}\left[#1 \right]}
\newcommand{\expv}[1]{\mathsf{E}\left[#1 \right]}
\newcommand{\paren}[1]{\mathopen{}\left(#1\right)\mathclose{}}
\newcommand{\ceil}[1]{\mathopen{}\left\lceil#1\right\rceil\mathclose{}}
\newcommand{\cclique}{$\mathsf{Congested}$\xspace$\mathsf{Clique}$\xspace}
\newcommand{\ncclique}{$\mathsf{Node}$\xspace$\mathsf{Congested}$\xspace$\mathsf{Clique}$\xspace}
\newcommand{\MPC}{$\mathsf{MPC}$\xspace}
\newcommand{\BSP}{$\mathsf{BSP}$\xspace}
\newcommand{\EP}{$\mathsf{EREW}$\xspace$\mathsf{PRAM}$\xspace}
\newcommand{\PRAM}{$\mathsf{PRAM}$\xspace}
\newcommand{\MR}{$\mathsf{MapReduce}$\xspace}
\newcommand{\SF}{$\mathsf{SF}$\xspace}
\newcommand{\MSF}{$\mathsf{MSF}$\xspace}
\renewcommand{\paragraph}[1]{\vspace{0.15cm}\noindent {\bf #1}:}
\definecolor{darkgreen}{rgb}{0,0.5,0}
\date{}
\author{
Krzysztof Nowicki\\
 \small University of Wroclaw  \\
 \small knowicki@cs.uni.wroc.pl
}
\begin{document}
\title{Random Sampling Applied to the MSF Problem \\
in the Node Congested Clique Model
    \thanks{This work was supported by the National Science Centre, Poland grant 2017/25/B/ST6/02010}
}

\maketitle              
    
\begin{abstract}
  The \cclique model, proposed by Lotker et al. \cite{Lotker_2005_MST}, was introduced in order to provide a simple abstraction for \emph{overlay networks}. \cclique is a model of distributed (or parallel) computing, in which there are $n$ players (nodes) with unique identifiers from set $\set{1,\dots, n}$, which perform computations in synchronous rounds. Each round consists of the phase of unlimited local computation and the communication phase. While communicating, each pair of nodes is allowed to exchange a single message of size $\bigO(\log n)$ bits. 
  
  Since in a single round each player can communicate even with $\Theta(n)$ other players, the model seems to be to powerful to imitate bandwidth restriction emerging from the underlying network. In this paper we study a restricted version of the \cclique model, the \ncclique model, proposed by Augustine et al. \cite{Augustine_2018_NCC}. The additional restriction is that in a single communication phase, a player is allowed to send / receive only $\bigO(\log n)$ messages.
  
  More precisely, we show that applying the algorithms from other models of distributed and parallel computing allows to provide an implementation of efficient subroutines in the \ncclique model. In  particular we provide communication primitives that improve the round complexity of the MST algorithm by Augustine et al. \cite{Augustine_2018_NCC} to $\bigO(\log^3 n)$ rounds, and give an $\bigO(\log^2 n)$ round algorithm solving the Spanning Forest (\SF) problem. Furthermore, we present an approach based on the random sampling technique by Karger, Klein and Tarjan \cite{Karger_1995_MST} that gives an $\bigO\paren{\log^2 n \ceil{\frac{\log \Delta}{\log \log n}}}$ round algorithm for the Minimum Spanning Forest (\MSF) problem.
  
  \noindent Besides the faster \SF / \MSF algorithms, we consider the key contributions to be:
  \begin{itemize}
    \item an efficient implementation of basic protocols in the \ncclique model,
    \item a tighter analysis of a special case of the sampling approach by Karger, Klein and Tarjan \cite{Karger_1995_MST} (and related results by Pemmaraju and Sardeshmukh \cite{Pemmaraju_2016_MST_o(m)}), 
    \item efficient $k$-sparse recovery data structure that requires $\bigO((k + \log n) \log n  \log k)$ bits and provides a recovery procedure that requires $\bigO((k + \log n) \log k)$ steps
  \end{itemize}
\end{abstract}


\renewcommand{\thefootnote}{\arabic{footnote}}

\setcounter{page}{0}
\thispagestyle{empty}
\newpage

\tableofcontents

\setcounter{page}{0}
\thispagestyle{empty}
\newpage

\section{Introduction}
\label{s:intro}

The fundamental theorem of software engineering [by David Wheeler] says that:

"We can solve any problem by introducing an extra level of indirection."\footnote{except for the problem of too many layers of indirection}

\noindent It is not a theorem in the mathematical sense, but rather a general principle for managing complexity through abstraction. Applying this principle to synchronous communication networks results in something we call an \emph{overlay network}. The abstraction is as follows: any two players can communicate in a single round and we do not take care about how this communication is realized by the underlying network. On the one hand it simplifies the process of algorithm design, on the other hand it deprives us of any control over how the communication is executed.

In the paper by Lotker et al. \cite{Lotker_2005_MST}, the authors introduce a simple abstraction for \emph{overlay networks}: the \cclique model. This is a model of distributed (or parallel) computing, in which there are $n$ players (nodes) with unique identifiers from set $\set{1,\dots, n}$ that perform computations in synchronous rounds. A single round consists of the phase of unlimited local computation and the communication phase. While communicating, each pair of nodes is allowed to exchange a single message of size $\bigO(\log n)$ bits.

The problem with modeling overlay networks via \cclique is that in a single round each player is allowed to communicate even with $\Theta(n)$ other players. This means that if an underlying network has some machines that have small number of physical connections, it may require really long time to simulate even a single round of the overlay network. 

To address this issue, in the paper by Augustine et al. \cite{Augustine_2018_NCC}, the authors propose a restricted version of the \cclique model, the \ncclique model. The additional restriction is that each player is allowed to send / receive only $\bigO(\log n)$ messages in a single round.

In this paper we continue studies initiated by Augustine et al. \cite{Augustine_2018_NCC}. There are multiple goals of our research -- the most important for us is to provide some basic tools and techniques for the \ncclique model. The secondary goal was to provide a faster algorithm for the Spanning Forest (\SF) and the Minimum Spanning Forest (\MSF) problems in the \ncclique. Finally, we want to extend the general understanding of the structure of the \SF and \MSF problems in decentralized models of computing.

More precisely, we show that it is possible to efficiently simulate the algorithms from other models of distributed and parallel computing, in order to provide efficient subroutines in the \ncclique model. Using this simulation, we provide some basic communication procedures, a sorting algorithm, and an implementation of edge-sampling and edge-recovery protocols relying on graph sketching \cite{Ahn_2012_sketches, Ghaffari_2016_MST_log*} and sparse recovery techniques \cite{Jowhari_2011_recovery, cormode_2014_sampling}.

Apart from the simulation theorems and the general purpose procedures, we provide algorithms for the Spanning Forest and the Minimum Spanning Forest problems in the \ncclique model. In the \SF problem, given a undirected graph, we have to identify a maximal acyclic subgraph. In the \MSF problem, given a undirected weighted graph, we want to identify a maximal acyclic subgraph that minimizes total weight of the edges.

Those problems are considered to be among the fundamental problems in the field of graph algorithmics. While, there are several approaches to the \SF and \MSF problems, we would like to recall the two results that are the most relevant to this paper: the first known \MSF algorithm (from 1926), by Otakar Borůvka \cite{Boruvka}, and the \MSF algorithm proposed by Karger, Klein and Tarjan \cite{Karger_1995_MST}. A general understanding of those results may be helpful to reading this paper.

While our approach to the \SF problem is basically an implementation of Borůvka's algorithm \cite{Boruvka}, the approach to the \MSF problems is slightly more complex -- it is based on the results on random sampling by Karger et al. \cite{Karger_1995_MST}, and the results of Pemmaraju et al. \cite{Pemmaraju_2016_MST_o(m)}. More precisely, the algorithm is a combination of the random sampling approach and the sparse recovery techniques. 

To provide an efficient implementation of this algorithm in the \ncclique model, we provide a tighter [than the result of Pemmaraju and Sardeshmukh \cite{Pemmaraju_2016_MST_o(m)}] analysis of the random sampling approach using $k$-wise independent random variables, when the probability of sampling is $1 / \Theta(k)$. Our results are rather incomparable to the results of Pemmaraju and Sardeshmukh \cite{Pemmaraju_2016_MST_o(m)} -- even though we achieve a better bound, in the general case we require a larger number of shared random bits.

\subsection{Related results / models}
In the paper introducing the \ncclique model \cite{Augustine_2018_NCC} the authors provide some connections between this model and \emph{hybrid networks} \cite{Gmyr_2017_Hybrid}, and \emph{$k$-machine model} \cite{Klauck_2015_KMachine}. We would like to point out that \ncclique, similarly as \cclique \cite{Lotker_2005_MST}, has connections to other \MPC (Modern / Massively Parallel Computing) \cite{Roughgarden_2016_MPC, Czumaj_2018_MPC, Ghaffari_2018_MPC} models like \MR \cite{Karloff_2010_MapReduce} and \BSP (Bulk Synchronous Parallel) \cite{Valiant_1990_BSP}. Furthermore, those connections extend to the older parallel computing models, e.g. Exclusive-Read Exclusive-Write Parallel RAM (\EP).

Even though several results have been published recently on the \MSF and Connected Components problems in the \MPC models (\cite{Lotker_2005_MST, Hegeman_2015_MST_logloglog, Ghaffari_2016_MST_log*, Korhonen_2016_MST_det, Pemmaraju_2016_MST_o(m), Jurdzinski_2018_MST_O(1)} in \cclique, \cite{Karloff_2010_MapReduce, Lattanzi_2011_filtering, Chitnis_2013_CC_MR, Kiveris_2014_CC_MR} in \MR, and \cite{Andoni_2018_CC_MPC} in $\mathsf{MPC}$), none of those results has a straightforward application in the \ncclique model. The main reason is that the \ncclique model has a small communication bandwidth, independent from the input size, while the \MPC models either have a large bandwidth by design or have a total bandwidth dependent on the number of edges in the input graph.

Still, the results from the \MPC and \PRAM models can be useful. If we manage to isolate some sufficiently small subproblems, such that the bandwidth is not a bottleneck anymore, we can try to apply some known results from other models to the \ncclique. In particular, in this paper, we use the results from the \cclique and \BSP models in order to provide some basic protocols of general use. Furthermore, we use the Connected Components and Spanning Forest algorithms \cite{halperin_1996_SF_PRAM} from the \EP model in order to solve instances of those problems that have sufficiently small number of edges properly distributed among the nodes of \ncclique.

\subsection{\ncclique}
\label{ss:ncclique_definition}
Formally the \ncclique model may be defined as a multi party communication model, in which:
\begin{itemize}
\item there are $n$ players (nodes)
\item each player has a unique identifier from set $\set{1, \dots, n}$
\item each player receives a part of the input set
\item computation is performed in synchronous rounds, each round consist of
\begin{itemize}
  \item a phase of local computation 
  \item a phase of communication
\end{itemize}
\item there is a communication link between each pair of players, however in the communication phase each node can send / receive only  $\bigO(\log n)$ messages of size $\bigO(\log n)$ bits
\end{itemize}

Although in the original version of the \ncclique model, the local computation is allowed to take arbitrarily long time to finish, we think that there should be a reasonable limit on the time complexity of local computation. In particular, all algorithms we provide have $\tilde \bigO(n)$ time complexity of each local computation step.

The goal of a \ncclique algorithm is to compute a value of some given function, defined on the sets of inputs of all players. As a result we can require either for all players to know the answer (which is reasonable only if a result is encodable on some small number of bits) or for each player to know some part of the result. In this paper we consider only the second variant.

\subsection{Spanning Forests in the \ncclique model}
For graph problems in the \ncclique we consider the following distribution of the input: each player corresponds to a single vertex of the input graph -- initially, the $i$th player knows all edges incident to the node with ID $i$ in the input graph. In particular this means that each edge is known by its two endpoints. 

In other words, we consider the \emph{vertex partition} multi party communication model: initially each player knows a set of all edges incident to a single vertex, as opposed to the \emph{edge partition} version in which each player initially knows an arbitrarily chosen set of edges. There is a significant difference between the \emph{vertex} and the \emph{edge} partitions, as for the \emph{edge partition} version there is an $\Omega(n^2 \log n)$ lower bound on the number of bits that have to be exchanged in order to compute \MSF \cite{Adler_1998_BSP_MTS}. Such lower bound implies an $\Omega(n / \log n)$ lower bound on the number of rounds, while for the \emph{vertex} partition there are $\polylog(n)$ round algorithms.

As a result, we require that for each edge from the resulting \SF / \MSF, at least one endpoint knows that this edge is a part of the \SF / \MSF. Again, requiring that both endpoints have to know all incident edges that are in the resulting \SF / \MSF, may require to deliver $\Omega(n)$ bits of information to a single node, which in turn implies an $\Omega(n / \log^2 n)$ round lower bound.

\begin{remark}
As in many other papers, we assume that in the graph of the \MSF problem no two edges have the same weight. The reduction from the general case to the case that no two edges have the same weight is by replacing the weight function $w$, by a function $w'$ defined as follows:$$w'\left(\langle u,v \rangle\right) = \left(w(\langle u,v\rangle),ID(u),ID(v)\right)$$.
\end{remark}

\section{High level description of the results}
This paper consists of three parts.

Firstly we introduce some communication primitives and discuss a simulation of the MPC / \EP algorithms in the \ncclique model. While our communication primitives can not do everything that is required by the algorithms by Augustine et al. \cite{Augustine_2018_NCC}, they are significantly faster in the special cases we consider. Furthermore, if we consider only the \MSF algorithm proposed by Augustine et al. \cite{Augustine_2018_NCC}, it also uses similar communication primitives, only in those special cases. Therefore, our efficient implementation yields a speed up of the \MSF algorithm by Augustine et al. \cite{Augustine_2018_NCC} by a $\Theta(\log n)$ factor.

Then, we discuss sparse recovery and graph sketching techniques \cite{Jowhari_2011_recovery, cormode_2014_sampling, Ahn_2012_sketches, Ghaffari_2016_MST_log*}, and provide efficient implementation of some edge recovery and edge sampling protocols in the \ncclique model.

In the last part of our paper, we provide algorithmic application of presented subroutines. As a warm up, we provide an implementation of a version of the Borůvka's algorithm, that finds a Spanning Forest of a graph in $\bigO(\log^2 n)$ rounds of \ncclique, \whp. Then we provide an alternative approach to the \MSF problem, which requires $\bigO\paren{\log^2 n \ceil{\frac{\log \Delta}{\log \log n}}}$ rounds, \whp. Moreover, since $\Delta \leq n$ the algorithm is faster at least by $\Theta(\log n \log \log n)$ factor than the \MSF algorithm by Augustine et al. \cite{Augustine_2018_NCC}. 

\subsection{Communication primitives}
The core of our efficient communication primitives is the sorting algorithm for the \BSP model proposed by Goodrich \cite{Goodrich_1999_Sorting}. Therefore, firstly we present a routing algorithm that allows us to simulate (some) BSP algorithms. Moreover, the routing procedure also gives an efficient simulation of (some) \EP algorithms, which we use in the later part of the paper. Then, we briefly discuss that one can efficiently simulate an instance of \ncclique with $\bigO(n)$ nodes on an instance of \ncclique with $n$ nodes in $\bigO(1)$ rounds. This simulation, together with the BSP sorting algorithm, allows us to provide efficient procedures propagating and aggregating information. More precisely, given a partition of nodes into disjoint sets, we provide procedures for:

\begin{itemize}
 \item propagating information among the members of a single set, simultaneously for all sets;
 \item aggregating information from the members of a single set in a distinguished member of this set, simultaneously for all sets.
\end{itemize}

\subsection{Sparse recovery and graph sketching techniques}
Our edge sampling and edge recovery protocols in the \ncclique model rely on the results of Jowhari et al. \cite{Jowhari_2011_recovery}, Cormode and Firmani \cite{cormode_2014_sampling}, Ahn et al. \cite{Ahn_2012_sketches} and Ghaffari and Parter \cite{Ghaffari_2016_MST_log*}. More precisely, for the edge sampling protocol we use existing $L_0$ sampler provided by Jowhari et al. \cite{Jowhari_2011_recovery} and Cormode and Firmani \cite{cormode_2014_sampling}, combined with graph neighbourhood encoding provided by Ahn et al. \cite{Ahn_2012_sketches}. 

For the edge recovery protocol, we use multiple instances of slightly modified version of this $L_0$ sampler. More precisely, we want to recover only vectors (that represent the sets of edges) that have the size of the support (the number of non zero coordinates) limited by some parameter $k$. With such promise we can use the $L_0$ sampler in a degree (or density) sensitive manner, as in the paper by Ghaffari and Parter \cite{Ghaffari_2016_MST_log*}.

\subsection{Spanning Forest Algorithms}
Our \SF algorithm is an implementation of Borůvka's algorithm \cite{Boruvka} adjusted to the \ncclique model. More precisely, to identify edges that are still between components we use graph sketching techniques \cite{Ahn_2012_sketches}. Furthermore, in order to merge connected components using those edges, we employ the algorithms for Connected Components and Spanning Forest from \EP \cite{halperin_1996_SF_PRAM}.

Our \MSF algorithm on the top level is based on the random sampling technique proposed by Karger, Klein, and Tarjan \cite{Karger_1995_MST} (KKT Sampling). 

The original version of the KKT Sampling Lemma says that we can reduce a single \MSF instance for a graph with $n$ nodes and $m$ edges to two sub-instances, such that for some $p \in (0,1)$:
\begin{itemize}
  \item the expected number of the edges in the \emph{first instance} is $mp$,
  \item the expected number of the edges in the \emph{second instance} is $n/p$.
\end{itemize}
Unfortunately, the \emph{second sub-instance} can be computed only after we find the \MSF $F$ for the \emph{first sub-instance}. This introduces some sequential dependence, which is unwelcome in distributed models of computation. Fortunately, the sequence of dependent instances (depth of recursion) in our case has length $\bigO\paren{\ceil{\frac{\log \Delta}{\log \log n}}}$. Moreover, each of those instances has a significantly smaller number of edges, which makes \ncclique computations easier.

For the \ncclique model we require a somehow stronger version of the KKT Sampling Lemma. The difficulties arise from the limited communication bandwidth. The former difficulty is that we have very limited shared randomness, therefore we execute the KKT Sampling with $k$-wise independent random variables, for some small value of $k$, instead of fully independent random variables. The latter difficulty is that naive implementation of the procedure computing the \emph{second instance} from the \emph{first instance} requires the \MSF of the \emph{first instance} to be known to every node in the network, which requires a lot of communication.

In the paper by Pemmaraju and Sardeshmukh \cite{Pemmaraju_2016_MST_o(m)}, the authors tackle similar issues: in particular they provide a procedure that computes the \emph{second instance} without widespread knowledge about $F$, which is correct even with the usage of random variables with bounded independence. However, we can not use their result as a building block of an efficient \MSF algorithm in the \ncclique model, as it still requires too large amount of communication.

Here, we provide an approach based on the results of Pemmaraju and Sardeshmukh \cite{Pemmaraju_2016_MST_o(m)}, exploiting the fact that we only consider some very special case of KKT Sampling. In particular we use for our purposes that $p^{-1} \in \bigO(\log n)$. This allows us to provide an efficient \ncclique implementation of the procedure finding $F$-light\footnote{The definition of the $F$-light edges appears in the later part of this paper. At this moment it is enough to know that such procedure is the main part of the procedure computing \emph{second instance} using the \MSF of the \emph{first instance}.} edges, relying on the linear graph sketching \cite{Ahn_2012_sketches} and the sparse recovery \cite{cormode_2014_sampling} techniques.

Besides the procedure identifying the set of $F$-light edges, we provide an implementation of the Borůvka's algorithm \cite{Boruvka} that provides the information we need for the procedure computing $F$-light edges. 

\section{Basic protocols}
\label{s:tools}
In this section we provide some basic procedures for the \ncclique model. 

Firstly we provide a deterministic routing protocol that relies on the results of Lenzen \cite{Lenzen_2013_Routing} for the \cclique model. Then we discuss various simulation results: we show that one can simulate larger instance of the \ncclique on a given instance; then, with use of the routing procedure, we show that the \ncclique model can simulate some special cases of the \BSP model (and other \MPC models), as well as the \EP model. As a corollary we get that we can efficiently sort in the \ncclique model using Goodrich's algorithm \cite{Goodrich_1999_Sorting} for the \BSP model. 

All those procedures allow us to efficiently solve slightly more complex communication problems: \emph{multicast} and \emph{aggregate}. In those problems, given a partition of nodes into disjoint set, in all sets simultaneously we want to either propagate some information or aggregate some information. More precise definition is provided later in this section. 

The \emph{multicast} and \emph{aggregate} protocols are based on well known approach -- for each set from a given division we build a balanced communication tree. The degree and depth of each tree is such that each level of a tree can communicate with a level directly below / above in $\bigO(1)$ rounds. Those communication trees are used by:
\begin{itemize}
\item the multicast protocol in order to propagate messages from the root to the leaves,
\item the aggregate protocol in order to aggregate in the root information from the leaves.
\end{itemize}

\subsection{Routing}
We define the \emph{routing problem} in the \ncclique in the following way.
\begin{problem}
Let $\alpha \in \Theta(\log n)$ be the number of different messages each node can send and receive in a single round of \ncclique. Each node is source and destination of $\bigO(\alpha)$ messages of size $\bigO(\log n)$. Initially only the sources know the destinations and contents of their messages. As a result, each node needs to learn all messages it is the destination of.
\end{problem}

\begin{theorem}
\label{t:routing}
The routing problem in the \ncclique can be solved in $\bigO(1)$ rounds, deterministically.
\end{theorem}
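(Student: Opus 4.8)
The plan is to reduce the \ncclique routing problem to Lenzen's routing result for the \cclique model, which solves the analogous problem where each node is source and destination of $O(n)$ messages in $O(1)$ rounds. The obstacle, of course, is the asymmetry of the models: in \cclique a node may send $\Theta(n)$ messages in a round, whereas in \ncclique it may send only $O(\log n)$. So a direct simulation would cost $\Theta(n/\log n)$ rounds, which is far too slow. The key idea must therefore be that we are not simulating an \emph{arbitrary} \cclique computation, but one in which the total number of messages is only $O(n\log n)$ — and the whole point of Lenzen's algorithm is that it spreads this load evenly so that by the end of the routing each node handles only $O(\log n)$ messages (its own), but \emph{during} the intermediate stages some nodes might temporarily hold more.

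First I would recall the structure of Lenzen's algorithm: it proceeds by having each node first redistribute its outgoing messages so that every node holds (close to) the average number of messages of each ``type'' (e.g.\ by destination block), then deliver them. Each such redistribution/delivery step, in \cclique, moves $O(n)$ messages per node but — crucially — is a \emph{uniform} communication pattern: in each round every node sends exactly one message to every other node (or to each node in some balanced partition). The plan is to show that each such ``every node sends one message to every other node'' round of Lenzen's protocol can be executed in the \ncclique in $O(1)$ rounds. This is the step I expect to be the heart of the proof. Naively it still needs $n$ messages per node. The resolution is that we first group the $n$ nodes into $n/\log n$ super-nodes of $\log n$ real nodes each; a single all-to-all round among super-nodes requires each super-node to exchange $O(\log^2 n)$ messages with each of the $n/\log n$ other super-nodes, i.e.\ $O(n\log n)$ messages total per super-node, hence $O(\log n)$ messages per real node per target super-node — wait, that is $O(n)$ per real node again. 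So the correct accounting is: an all-to-all round moves $n^2$ messages total, i.e.\ $n$ per node on average; to do this with $O(\log n)$ capacity we would need $\Omega(n/\log n)$ rounds unconditionally. Hence Lenzen's algorithm cannot be used as a black box; instead we must use that in \emph{our} routing instance the total message count is only $O(n\log n)$, so that Lenzen's protocol, when asked to route $O(n\log n)$ messages among $n$ nodes, after its balancing phase has every node holding only $O(\log n)$ messages — and the balancing and delivery phases, when the load is this light, involve each node sending only $O(\log n)$ messages per round.

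Concretely, the steps I would carry out are: (1) restate Lenzen's theorem in the form ``any routing instance with at most $L$ messages total can be solved in $O(L/n + 1)$ rounds of \cclique, and moreover in each round each node sends/receives at most $O(L/n + 1)$ messages''; (2) observe that this last ``moreover'' is exactly what Lenzen's analysis gives, because his algorithm's invariant is that no node ever holds more than a constant factor above the running average; (3) instantiate with $L = O(n\log n)$, so each node sends/receives $O(\log n)$ messages per round throughout — which is precisely the \ncclique budget; (4) conclude that the protocol runs verbatim in \ncclique in $O(1)$ rounds. I would present this as a short lemma extracting the per-round-degree bound from Lenzen's work, then a one-line instantiation.

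The main obstacle, then, is \textbf{verifying that Lenzen's routing algorithm indeed maintains an $O(L/n)$ per-round communication degree at every node}, not merely an $O(L/n)$ amortized or final load — i.e.\ that no intermediate step secretly funnels many messages through one node. If Lenzen's paper states this directly (it essentially does, since the algorithm is built from balanced all-to-all exchanges on sub-instances of size $O(L/n)$ per node), the proof is immediate; otherwise one must re-examine each phase (the information-gathering phase, the sorting/balancing by destination, and the final delivery) and check the degree bound phase by phase. A secondary, minor point is handling the case $L/n = \Theta(\log n)$ versus smaller — but since here $\alpha \in \Theta(\log n)$ and $L = O(\alpha n)$, everything is within a constant factor of the budget, so no padding argument is needed.
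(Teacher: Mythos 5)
There is a genuine gap. Your entire argument rests on step (2): the claim that Lenzen's routing protocol, run as a black box on the full $n$-node clique, has per-round send \emph{and receive} degree $\bigO(L/n+1)$ whenever the instance has only $L$ messages in total. You flag this yourself as ``the main obstacle,'' but then assert it ``essentially'' holds; it does not follow from anything you argue, and it is doubtful as stated. Lenzen's algorithm is built for the fully loaded Information Distribution Task and freely uses degree-$(n-1)$ communication: its bookkeeping steps (announcing per-destination counts, coordinating the balancing) are broadcasts, and its redistribution steps are balanced by \emph{message count}, not by \emph{communication partners}. A bound on how many messages a node holds at any time (your invariant in step (2)) does not bound how many distinct nodes send to it in a single round: in a light instance a redistribution rule of the form ``send your $j$-th message to node $j$'' keeps every sender's out-degree at $\bigO(\log n)$ while giving some receiver in-degree $\Theta(n)$. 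So the protocol does not run ``verbatim'' in the \ncclique, and making it run would require re-engineering its phases --- which is essentially designing a new algorithm, not instantiating a lemma.

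The paper's proof avoids the whole issue with an idea your proposal touches and then discards with the wrong accounting: it partitions the nodes into groups of size $\alpha\in\Theta(\log n)$. Within such a group, each node may exchange one message with every other group member per round, so the group is a \emph{full-rate} $\alpha$-node \cclique, and Lenzen's \emph{sorting} algorithm (not his routing) can be run there with no slowdown to sort the $\bigO(\alpha^2)$ messages of the group by destination in $\bigO(1)$ rounds. A cyclic redistribution of the sorted messages then guarantees each node holds only $\bigO(1)$ messages per destination (otherwise some destination would be the target of more than $\bigO(\alpha)$ messages, contradicting the problem statement), after which direct delivery finishes in $\bigO(1)$ rounds, the receive budget being respected because each destination gets only $\bigO(\alpha)$ messages in total. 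Note the groups are used as independent small cliques for local sorting, not as super-nodes performing a global all-to-all --- which is where your super-node calculation went astray. To repair your write-up you would need either this grouping argument or a genuinely new proof that every phase of Lenzen's routing is partner-balanced on light instances; the latter is not available off the shelf.
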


\begin{proof}
Initially, we split nodes into groups of size $\alpha$. in such a way that in the $j$th group there are nodes with IDs $\set{(j-1)\alpha + 1,\dots, j\alpha}$.

At the beginning we sort all messages by their destinations, in all group simultaneously. There are $\bigO(\alpha^2)$ messages, and we have a set of nodes which basically can communicate as a \cclique consisting of $\alpha$ nodes, therefore we can sort all those messages using the Lenzen's sorting algorithm \cite{Lenzen_2013_Routing} for the \cclique model, in $\bigO(1)$ rounds, deterministically. 

Then, each node sends the $i$th smallest message from local memory, to the  $i$th smallest member of its group (for $i > \alpha$, we send the message to $i'$th node, where $i' = ((i-1) \mod \alpha) +1$). At this point, each node in each group has only $\bigO(1)$ messages with the same destination. If that would not be the case, it would mean that the destination that occurs too many times is a destination of more than $\bigO(\alpha)$ messages, and that's forbidden by the problem statement. Since each node has some constant number of messages to each destination, we can send them one by one, in $\bigO(1)$ rounds.
\end{proof}

\subsection{Simulation}
In this subsection we briefly explain, that an \ncclique of size $n$ can simulate several different models of distributed and parallel computation. More precisely, we consider simulating larger instances of \ncclique model, as well as the \BSP and \EP models.

\paragraph{The \BSP and \EP models} 

In the \BSP model, similarly as in \ncclique, there are $n$ distinct machines (players / nodes), each has a unique identifier from set $\set{1,\dots, n}$, and each player initially knows a part of the input. The difference between \ncclique and \BSP is that in \BSP communication is performed in the following way: 
\begin{itemize}
\item players simultaneously exchange messages of size $\bigO(\log n)$ bits,
\item each pair of players can exchange multiple messages in a single round (still those messages are sent simultaneously by both players), 
\item the total number of messages sent / received by each machine is limited by some parameter $h$.
\end{itemize}

The \EP model, is a model of parallel computing, in which there are $p$ processors, with a random access to a shared memory of size $m$. The processors execute read / write operation exclusively, i.e. in a single step of the algorithm only one processors can access a single unit of memory. Moreover, in a single step each processor can access only $\bigO(1)$ different memory units.

\paragraph{Summary of the Simulation Theorems} 

In this section we show that it is possible to simulate:
\begin{itemize}
  \item a single round of $\bigO(n)$ node \ncclique in $\bigO(1)$ rounds of a \ncclique of size $n$ (Theorem \ref{t:simulation})
  \item a single round of the \BSP model with $\bigO(n)$ machines, and $h \in \bigO(\log n)$ in $\bigO(1)$ rounds of a \ncclique of size $n$ (Theorem \ref{t:simulation_BSP})
  \item a single round of \EP with $\bigO(n \log n)$ processors and $\bigO(n \log n)$ memory in $\bigO(1)$ rounds of a \ncclique of size $n$ (Theorem \ref{t:simulation_EP})
\end{itemize}

\newpage

\paragraph{Simulation of a larger instance of the \ncclique model}
\begin{problem}
  There are at most $cn$ nodes to be simulated on the $n$ node \ncclique. Node with ID $i$ knows the state of internal memory of all nodes with IDs that are congruent to $i$ modulo $n$. The goal is to perform communication between all simulated nodes, in $\bigO(1)$ rounds.
\end{problem}

\begin{theorem}
  \label{t:simulation}
  It is possible to simulate a \ncclique of size $\bigO(n)$ on a \ncclique of size $n$ in $\bigO(1)$ rounds.
\end{theorem}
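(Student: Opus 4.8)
The plan is to reduce the simulation to a single application of the routing primitive from Theorem~\ref{t:routing}. Fix the constant $c$ so that there are at most $cn$ simulated nodes, and assume $c = \bigO(1)$. The key observation is that each real node already stores the internal state of at most $\lceil c \rceil$ simulated nodes (those with IDs congruent to its own ID modulo $n$), so the local computation phase of a simulated round can be carried out immediately and for free: real node $i$ simply runs the local computation of all simulated nodes it hosts. What remains is to realize one communication phase of the $cn$-node \ncclique. In that phase, each simulated node sends and receives $\bigO(\log n)$ messages, so each real node is the source of at most $\lceil c\rceil \cdot \bigO(\log n) = \bigO(\log n)$ messages and, symmetrically, the destination of $\bigO(\log n)$ messages. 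Hence the communication demand of the simulated round, when pushed down to the real nodes, is exactly an instance of the routing problem with $\alpha \in \Theta(\log n)$.

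So the algorithm is: (1) each real node locally computes, for every simulated node it hosts, the list of (destination, content) pairs that this simulated node wants to send; (2) each such message $m$ destined for simulated node $v$ is tagged with the real node $v \bmod n$ that hosts $v$; (3) invoke the routing protocol of Theorem~\ref{t:routing} to deliver all these messages in $\bigO(1)$ rounds; (4) each real node, upon receiving the messages, distributes them internally to the appropriate hosted simulated nodes and updates their states. Each real node is source and destination of $\bigO(\alpha)$ messages, so Theorem~\ref{t:routing} applies and step (3) costs $\bigO(1)$ rounds. Steps (1), (2), (4) are pure local computation. Therefore one round of the $cn$-node \ncclique is simulated in $\bigO(1)$ rounds, and by repeating this for each round, any $\bigO(n)$-node \ncclique computation is simulated with $\bigO(1)$ overhead per round.

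One point that needs a brief check: in the simulated \ncclique, a ``message'' of size $\bigO(\log n)$ bits might need, after tagging with the $\bigO(\log n)$-bit host identifier $v \bmod n$, slightly more than $\bigO(\log n)$ bits; but this is still $\bigO(\log n)$ bits, so it fits in a constant number of real messages, and the message-count bound only changes by a constant factor. Similarly, if $c$ is a constant the factor $\lceil c \rceil$ is absorbed into the $\bigO(\cdot)$; if one wanted $c$ to be a slowly growing function this would no longer hold, but the statement only claims $\bigO(n)$ simulated nodes.

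I expect the main (very mild) obstacle to be purely bookkeeping: making sure that the per-node message budget of the \emph{simulating} network is not exceeded, i.e.\ verifying that summing the $\bigO(\log n)$-message budgets of the $\lceil c\rceil$ simulated nodes hosted at a real node still yields $\bigO(\log n)$, and likewise on the receiving side — which it does because the hosting assignment $v \mapsto v \bmod n$ is a fixed $\lceil c\rceil$-to-one map independent of the message pattern. There is no genuine algorithmic difficulty here; the whole content is the reduction to Theorem~\ref{t:routing}.
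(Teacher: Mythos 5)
Your proof is correct, but it takes a different route from the paper. The paper proves Theorem~\ref{t:simulation} without invoking Theorem~\ref{t:routing} at all: each real node orders its (at most $c$) hosted virtual nodes by ID, and the communication phase is scheduled as a round-robin over the at most $c^2$ pairs of positions $(a,b)$, where in the phase for $(a,b)$ only the virtual nodes occupying position $a$ talk to the virtual nodes occupying position $b$; since each real node hosts at most one virtual node per position, every phase is a legal communication round of the $n$-node \ncclique, and the whole simulation costs $\Theta(c^2)=\bigO(1)$ plain rounds (following the scheme of the Jurdzi\'nski--Nowicki \cclique paper cited there). You instead reduce the simulated communication phase to a single instance of the routing problem: each real node is source and destination of at most $\lceil c\rceil\cdot\bigO(\log n)=\bigO(\log n)$ messages, tagging with the host ID $v \bmod n$ keeps messages at $\bigO(\log n)$ bits, so Theorem~\ref{t:routing} applies and gives $\bigO(1)$ rounds; this is sound and non-circular, since the routing theorem is established independently. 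The trade-off: the paper's argument is more elementary and self-contained (no Lenzen-sorting machinery behind it), whereas your reduction is shorter and in effect already packages the functionality the paper only assembles later, in Theorem~\ref{t:simulation_BSP}, where it combines Theorem~\ref{t:simulation} with Theorem~\ref{t:routing} in exactly the way you do here.
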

\begin{proof}
The simulation is pretty much the same as in the paper \cite{Jurdzinski_2018_MST_O(1)}. In this proof, we call the nodes of the original \ncclique \emph{real}, and simulated nodes \emph{virtual}.  
The protocol is following: each \emph{real} node orders \emph{virtual} nodes assigned to it by ID. Thus, each \emph{real} node has a sequence of \emph{virtual} nodes of length up to $c$. 

Let consider a sequence of pairs $i,j$, such that $1 \leq i \leq j\leq c$. The protocol takes exactly as many phases as the length of this sequence. In the $k$th phase, we consider the $k$th element of this sequence. Let assume that this element is a pair $(a,b)$, then in the $k$th phase we perform communication between all \emph{virtual} nodes that are $a$th element of the local list of some \emph{real} node, and all \emph{virtual} nodes that are $b$th element on the local list of some \emph{real} node. This way, each pair of \emph{real} nodes has to exchange at most one message at a time.

The whole protocol requires $\Theta(c^2)$ rounds, and since $c$ is a constant, it requires $\bigO(1)$ rounds.

\end{proof}

\paragraph{Simulation of the BSP model}
\begin{problem}
  There are at most $cn$ \BSP machines to be simulated on the $n$ node \ncclique. Node with ID $i$ knows the state of internal memory of all machines with IDs that are congruent to $i$ modulo $n$. The communication bandwidth $h$ of a single machine is $\bigO(\log n)$. The goal is to perform communication between all simulated nodes, in $\bigO(1)$ rounds.
\end{problem}

\begin{theorem}
  \label{t:simulation_BSP}
  It is possible to simulate a single step of a \BSP model with $\bigO(n)$ machines, and $h \in \bigO(\log n)$, on a \ncclique of size $n$ in $\bigO(1)$ rounds.
\end{theorem}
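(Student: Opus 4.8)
The plan is to reduce one \BSP superstep directly to the routing problem of Theorem~\ref{t:routing}. First I would fix the mapping of \BSP machines to real \ncclique nodes exactly as in the problem statement preceding Theorem~\ref{t:simulation}: the machine with ID $m$ is hosted by the real node with ID $((m-1)\bmod n)+1$, so that each real node hosts at most $c$ virtual \BSP machines and already knows their internal states. Each real node then runs the local-computation phase of all its hosted machines for free; the only thing left to realize is the communication phase of the superstep.

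Next I would observe that the set of messages to be delivered is a valid instance of the routing problem. In a \BSP superstep every machine sends and receives at most $h \in \bigO(\log n)$ messages, so a real node hosting at most $c$ machines is the source of at most $ch=\bigO(\log n)$ messages and the destination of at most $ch=\bigO(\log n)$ messages; since $\alpha\in\Theta(\log n)$, both quantities are $\bigO(\alpha)$. Re-route each message from machine $m$ to machine $m'$ as a message between the two corresponding real nodes. By Theorem~\ref{t:routing} this routing instance is solved in $\bigO(1)$ deterministic rounds, after which each real node locally dispatches every received message to the appropriate hosted virtual machine, completing the simulated superstep.

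The one conceptual difference between \BSP and \ncclique — that \BSP lets a pair of machines exchange arbitrarily many messages in a single round while \ncclique permits only one message per pair per round — is precisely what the routing protocol absorbs: its proof handles destinations that are arbitrarily concentrated as long as no node is the destination of more than $\bigO(\alpha)$ messages in total. Hence the per-pair restriction of \ncclique is irrelevant here.

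I do not expect a genuine obstacle; the only point that needs careful bookkeeping is that packing $cn$ machines onto $n$ nodes inflates the per-node message counts by just the constant factor $c$, keeping them within the $\bigO(\alpha)$ budget required by Theorem~\ref{t:routing}. (Alternatively, one could first invoke Theorem~\ref{t:simulation} to reduce to the case of exactly $n$ machines and then route, but the direct reduction above is cleaner.)
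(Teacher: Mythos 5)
Your proof is correct, and it takes a slightly different (and more economical) route than the paper. Both arguments ultimately reduce the BSP superstep to the routing problem of Theorem~\ref{t:routing}, but the paper first introduces one virtual node per BSP machine and invokes Theorem~\ref{t:simulation} to let the up-to-$cn$ virtual nodes communicate, and only then runs the routing scheme at the granularity of virtual nodes. You instead fold the load directly onto the $n$ real nodes: each real node hosts at most $c$ machines, so it is the source and destination of at most $ch=\bigO(\log n)=\bigO(\alpha)$ messages, which is already a legal instance of the routing problem, and Theorem~\ref{t:simulation} is never needed. What your version buys is a shorter, self-contained proof that sidesteps the $\Theta(c^2)$-phase pairwise scheduling of the larger-instance simulation; what the paper's version buys is modularity, since the virtual-node abstraction it sets up here is the same one reused elsewhere (e.g.\ in the multicast-tree construction), and it keeps the simulation at the per-machine granularity. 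The only bookkeeping your write-up leaves implicit is that each routed message must carry the ID of its target machine so the hosting real node can dispatch it locally; this adds $\bigO(\log n)$ bits per message and so stays within the $\bigO(\log n)$-bit message size, and your observation that the routing protocol tolerates arbitrary concentration of destinations (only the total per-node count matters) correctly disposes of the per-pair restriction of the \ncclique model.
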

\begin{proof}
We begin with the assignment of some virtual nodes to the nodes of \ncclique. More precisely, we assign to the node with ID $j$, which has to simulate $k$ \BSP machines, $k$ virtual nodes with IDs $j, j+n,  \dots, j+(k-1)\cdot n$. By Theorem \ref{t:simulation} we can simulate communication between all those nodes in $\bigO(1)$ rounds. Moreover, by Theorem \ref{t:routing}, we can execute routing scheme for those virtual nodes, as long as each virtual node sends / receives $\bigO(\log n)$ messages, in $\bigO(1)$ rounds. Therefore, we can simulate \BSP-like communication between all pairs of virtual nodes (thus simulated \BSP machines) in $\bigO(1)$ rounds of the \ncclique model of size $n$.
\end{proof}

As a corollary, we have that one can sort some set of keys in the \ncclique model, as long as it is sufficiently small, and evenly distributed. We define the \emph{sorting problem} in the \ncclique model in the following way.

\begin{problem}
Let $\alpha \in \bigO(\log n)$ be the number of different messages each node can send and receive in a single round of \ncclique. Each node is given $\bigO(\alpha)$ comparable keys of size $\bigO(\log n)$. Let $c$ be a value such that the total number of keys is no larger than $c \cdot \alpha n$. 

As a result, node $i$ needs to learn about the keys with indices $(i-1) \cdot c \alpha + 1,\dots, i \cdot c \alpha$ in a global enumeration of the keys that respects their order. Alternatively, we can require that nodes need to learn the indices of their keys in the total order of the union of all keys (i.e., all duplicate keys get the same index). 
\end{problem}

\begin{corollary}
\label{c:sorting}
It is possible to solve the sorting problem in \ncclique model, deterministically in $\bigO(\log n / \log \log n)$ rounds.
\end{corollary}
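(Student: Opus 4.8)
The plan is to realize the sorting by simulating Goodrich's \BSP sorting algorithm \cite{Goodrich_1999_Sorting} on top of the \ncclique, using the simulation guaranteed by Theorem \ref{t:simulation_BSP}.

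\textbf{Setting up the virtual \BSP instance.} Let $N \le c\alpha n = \Theta(n\log n)$ be the total number of keys and let $c_1$ be the constant hidden in the "$\bigO(\alpha)$ keys per node" bound. I create $n$ virtual \BSP machines, with machine $i$ holding the keys initially held by node $i$ of the \ncclique (so machine $i$ is hosted at node $i$, consistent with Theorem \ref{t:simulation_BSP}); to make the input balanced I pad each machine's key set with sentinel keys $+\infty$ so that every machine holds exactly $c_1\alpha = \Theta(\log n)$ keys. Thus the simulated instance has $p=n$ machines, bandwidth $h=\Theta(\log n)$, and $N'=c_1\alpha n=\Theta(n\log n)$ keys in total. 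By Theorem \ref{t:simulation_BSP}, a single communication round of this instance runs in $\bigO(1)$ rounds of the $n$-node \ncclique, and since each machine only ever manipulates $\bigO(\log n)$ keys, the local computation per round stays well within the $\tilde\bigO(n)$ budget.

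\textbf{Invoking Goodrich's algorithm.} On a \BSP machine with $p$ processors and $h=\Theta(N'/p)$ bandwidth, Goodrich's algorithm sorts $N'$ keys in $\bigO(\lceil \log N' / \log h\rceil)$ communication rounds while keeping $\Theta(N'/p)$ keys per processor at every round, so the hypothesis $h \in \bigO(\log n)$ of Theorem \ref{t:simulation_BSP} is respected throughout. With $N'=\Theta(n\log n)$ and $h=\Theta(\log n)$ we get $\log N' / \log h = (\log n + \log\log n)/\log\log n = \Theta(\log n/\log\log n)$ rounds of \BSP, hence $\bigO(\log n/\log\log n)$ rounds of \ncclique. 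At termination machine $i$ holds the $i$-th contiguous block of $\Theta(\log n)$ keys of the global sorted order (sentinels sitting at the high end), and since machine $i$ lives at node $i$, node $i$ already knows its block. Discarding sentinels and, in case the requested block boundaries $c\alpha$ do not line up with $N'/n$, performing one $\bigO(1)$-round routing step (Theorem \ref{t:routing}) to hand each key to the node whose index range contains it, yields the first output format. For the "learn the index of each key" variant I would append a ranking step: a prefix sum over the $n$ machines on the per-block counts of non-sentinel keys (computable in $\bigO(\log n/\log\log n)$ \BSP rounds) gives each machine the starting rank of its block, from which every key's global index follows locally; routing each (key, index) pair back to its original holder via Theorem \ref{t:routing} takes $\bigO(1)$ rounds.

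\textbf{Main obstacle.} The conceptual content is entirely carried by Goodrich's algorithm and Theorem \ref{t:simulation_BSP}; the real work is bookkeeping. One must check that Goodrich's algorithm indeed maintains a balanced load of $\Theta(N'/p)$ keys per processor in every round (so that the $h\in\bigO(\log n)$ precondition of Theorem \ref{t:simulation_BSP} is never violated) and that the per-round local computation fits the stated model limits, and one must reconcile the padded, possibly-uneven input with the uniform output block size $c\alpha$ demanded by the problem statement — handled by the sentinel padding plus the single extra routing round above.
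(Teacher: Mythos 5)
Your proposal is correct and follows essentially the same route as the paper: simulate Goodrich's \BSP sorting algorithm via Theorem \ref{t:simulation_BSP} with $n$ machines and $h \in \Theta(\log n)$, yielding $\bigO(\log n / \log h) = \bigO(\log n / \log\log n)$ rounds. The paper's own proof is just this two-line argument; your sentinel padding, output realignment via Theorem \ref{t:routing}, and ranking step are additional bookkeeping the paper leaves implicit.
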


\begin{proof}
By Theorem \ref{t:simulation_BSP}, we have that the \ncclique model can simulate the \BSP model that has $n$ machines and $h \in \Theta(\log n)$. Since we want to sort $\bigO(n \log n)$ keys, we can use the sorting algorithm by Goodrich \cite{Goodrich_1999_Sorting} and solve the sorting problem in $\bigO(\log n / \log h) = \bigO(\log n / \log \log n)$ rounds, deterministically.
\end{proof}

\paragraph{Simulation of the \EP model} 

\begin{problem}
  There are at most $cn \log n$ \EP processors (with unique identifiers) and a shared memory of size $\bigO(n \log n)$ to be simulated on the \ncclique of size $n$. Node with ID $i$ knows the state of internal memory of all \EP processors with IDs that are congruent to $i$ modulo $n$, as well as the memory units that have addresses congruent to $i$ modulo $n$. The goal is to simulate read / write operations of all processors in $\bigO(1)$ rounds of the \ncclique.
\end{problem}

\begin{theorem}
  \label{t:simulation_EP}
  It is possible to simulate a single step of a \EP machine with $\bigO(n \log n)$ processors and $\bigO(n \log n)$ memory, on a \ncclique of size $n$ in $\bigO(1)$ rounds.
\end{theorem}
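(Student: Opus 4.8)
The plan is to reduce one step of the \EP machine to a constant number of applications of the routing procedure of Theorem~\ref{t:routing} (equivalently, to a single \BSP round, invoking Theorem~\ref{t:simulation_BSP}). First I would fix the correspondence between simulated objects and \ncclique nodes that the problem statement prescribes: the node with ID $i$ owns the block $P_i$ of all \EP processors whose IDs are congruent to $i$ modulo $n$, together with the block $M_i$ of all memory cells whose addresses are congruent to $i$ modulo $n$. Since there are $\bigO(n\log n)$ processors and $\bigO(n\log n)$ memory cells, each $P_i$ holds $\bigO(\log n)$ processors and each $M_i$ holds $\bigO(\log n)$ cells. A memory word, an address, and a processor ID each fit in $\bigO(\log n)$ bits, so every read/write request — encoded as a triple (target address, operation type, value to be written, plus the ID of the requesting processor) — is a single $\bigO(\log n)$-bit message, and the owner of any given cell or processor is computable locally from its index.

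Next I would execute the step in two routing phases. In the \emph{request phase}, every processor in $P_i$ emits its $\bigO(1)$ read/write requests, so node $i$ is the source of $\bigO(\log n)$ messages, each addressed to the owner of the requested cell; and because the \EP model forbids concurrent access, in a single step each cell is touched by at most one processor, so the node owning $M_j$ is the destination of $\bigO(\log n)$ requests. This is exactly an instance of the routing problem, so Theorem~\ref{t:routing} delivers all requests in $\bigO(1)$ rounds. Each node then locally services the requests for its own cells: it overwrites the cells named by write requests, and for each read request it records the current value of the named cell together with the requesting processor's ID. In the \emph{answer phase}, each node sends its recorded values back to the owners of the requesting processors; since each processor issued $\bigO(1)$ reads, node $i$ receives $\bigO(\log n)$ answers, and since a node records at most one answer per cell it owns, it sends $\bigO(\log n)$ answers — again an instance of the routing problem, solved in $\bigO(1)$ rounds. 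Finally each node uses the returned values and its local knowledge of processor state to advance every processor in its block; the whole simulation uses $\bigO(1)$ rounds.

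The only point that needs genuine care is keeping both the out-degree and the in-degree of the routing instances within $\bigO(\log n)$. The out-degree bound is immediate, since each node hosts $\bigO(\log n)$ processors and each makes $\bigO(1)$ accesses. The in-degree bound is precisely where the exclusive-read/exclusive-write assumption enters: it guarantees that the $\bigO(\log n)$ cells owned by a node collectively attract only $\bigO(\log n)$ requests, and symmetrically that only $\bigO(\log n)$ answers must be returned to it; for a concurrent model this would fail, which is why the statement is restricted to \EP. Everything else is bookkeeping, and the argument can equivalently be phrased as: a single \EP step with these parameters is a single \BSP round with $\bigO(n)$ machines and bandwidth $h\in\bigO(\log n)$, to which Theorem~\ref{t:simulation_BSP} applies directly.
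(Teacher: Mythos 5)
Your proposal is correct and follows essentially the same route as the paper: each \ncclique node hosts the $\bigO(\log n)$ processors and $\bigO(\log n)$ memory cells assigned to it, and a step is simulated by a constant number of invocations of Theorem~\ref{t:routing} (one phase for writes/requests, a second for returning read answers), with the degree bounds for routing following from the block sizes and the exclusive-access property. Your explicit remark that the EREW assumption is what caps the in-degree at $\bigO(\log n)$ is a nice clarification the paper leaves implicit, but the argument is the same.
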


\begin{proof}
  By the definition of the problem, each \ncclique node has to simulate $\bigO(\log n)$ processors ans $\bigO(\log n)$ memory units. The assignment of processors / memory units is such that for each processor ID / memory address it is possible to determine which node of the \ncclique is responsible for simulating this particular processor / memory unit. 
  
  In order to execute \emph{write} operation, a node simulating processor executing \emph{write} sends information to be written to the node responsible for simulating target memory unit. Since each node of \ncclique simulates $\bigO(\log n)$ processors, and $\bigO(\log n)$ memory units, each node is a source and destination of $\bigO(\log n)$ different messages. Therefore, by Theorem \ref{t:routing}, such procedure can be executed in $\bigO(1)$ rounds. 
  
  The simulation of the \emph{read} operation can be done in a similar way. The difference between \emph{read} and \emph{write} is that whole \emph{write} is executed in two phases. In the first phase the node simulating processor sends to the node simulating target memory unit that it wants to read some particular memory unit. In the latter phase each node that received some \emph{read} requests sends requested data to the nodes that requested them. In both phases each node has to send or receive $\bigO(\log n)$ messages, therefore the whole procedure can be executed in $\bigO(1)$ rounds, by applying twice Theorem \ref{t:routing}.
 
\end{proof}

\noindent As a Corollary we get the following.
\begin{corollary}
\label{c:ep_sf}
It is possible to identify the Connected Components / a Spanning Forest of a graph $G$ represented as a list of $\bigO(n \log n)$ edges, which are evenly distributed among nodes of \ncclique, in $\bigO(\log n)$ rounds of \ncclique, \whp. As a result of the Connected Component algorithm, each node knows the ID of an arbitrary chosen member of its component. As a result of the Spanning Forest Algorithm, we get the evenly distributed list of edges that form a spanning forest.
\end{corollary}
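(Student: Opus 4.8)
The plan is to reduce the problem to the \EP model and then invoke a known parallel connectivity algorithm. Concretely, I would run the randomized logarithmic‑time connected components / spanning forest algorithm of Halperin and Zwick \cite{halperin_1996_SF_PRAM}, which on an \EP with a graph of $n$ vertices and $m$ edges runs in $\bigO(\log n)$ time, is correct \whp, and uses $\bigO((n+m)/\log n)$ processors and $\bigO(n+m)$ memory (a non‑work‑optimal $\bigO(n+m)$‑processor \EP connectivity algorithm would do equally well). Here $m \in \bigO(n\log n)$ and $G$ has at most $n$ vertices (one per \ncclique node), so the processor count is $\bigO(n)$ and the memory is $\bigO(n\log n)$; both fit the budget of Theorem~\ref{t:simulation_EP}, which emulates one \EP step with $\bigO(n\log n)$ processors and $\bigO(n\log n)$ memory in $\bigO(1)$ rounds of an $n$‑node \ncclique. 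Composing the two, the algorithm itself costs $\bigO(\log n)\cdot\bigO(1)=\bigO(\log n)$ rounds, and it remains only to handle input and output formatting.

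First I would move the input into the canonical layout demanded by the \EP simulation problem, in which the cell with address $\equiv i \pmod n$ lives at node $i$: starting from $\bigO(\log n)$ arbitrary edges per node, one application of the routing protocol (Theorem~\ref{t:routing}) sends each edge to the node responsible for its target cell; if the chosen \EP algorithm expects the edges grouped into per‑vertex adjacency arrays rather than an unordered list, I would first sort the $\bigO(n\log n)$ edges by endpoint via Corollary~\ref{c:sorting}, at an extra cost of $\bigO(\log n/\log\log n)$ rounds. Symmetrically, after the simulated \EP run I would extract the result: for connected components the algorithm leaves, for each vertex $v$, the identifier of its component representative in a known cell, which one more invocation of Theorem~\ref{t:routing} delivers to node $v$ (each node is the destination of exactly one such message) in $\bigO(1)$ rounds; for spanning forest the algorithm produces at most $n-1$ tree edges scattered in memory, which I would gather and spread evenly across the $n$ nodes using routing (or Corollary~\ref{c:sorting} if a canonical order is desired) in $\bigO(\log n/\log\log n)$ rounds. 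Summing the three phases gives $\bigO(\log n)$ rounds overall, with the \EP simulation the dominant term, and the \whp\ qualifier is inherited verbatim since the simulation is deterministic.

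The main obstacle I expect is bookkeeping rather than conceptual depth. One has to (i) verify that the particular \EP connectivity algorithm invoked really has processor and space bounds that stay inside the $\bigO(n\log n)$ envelope of Theorem~\ref{t:simulation_EP} for $m$ as large as $\Theta(n\log n)$ — which the Halperin–Zwick bounds comfortably do — and (ii) reconcile the input representation assumed by that algorithm with the ``evenly distributed edge list'' we are handed, which is precisely what the sorting primitive of Corollary~\ref{c:sorting} and the routing primitive of Theorem~\ref{t:routing} are designed to bridge. Everything else is a mechanical composition of the simulation theorems already established.
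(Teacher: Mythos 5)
Your proposal is correct and follows essentially the same route as the paper: treat the evenly distributed edge list as the initial memory of an \EP machine, run the Halperin--Zwick $\bigO(\log n)$-time connectivity/spanning forest algorithm, and simulate each step via Theorem~\ref{t:simulation_EP} for a total of $\bigO(\log n)$ rounds. The only difference is that you spell out the input/output reformatting via Theorem~\ref{t:routing} and Corollary~\ref{c:sorting} and verify the processor/memory budget, details the paper leaves implicit.
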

\begin{proof}
  Since list of edges is evenly distributes, we have that each node of \ncclique knows $\bigO(\log n)$ edges. Therefore, we can treat such list of the edges as a initial state of memory of a \EP machine. Since, by the result of Halperin and Zwick \cite{halperin_1996_SF_PRAM}, we can identify the Connected Components or a Spanning Forest of a graph in $\bigO(\log n)$ rounds in \EP, by Theorem \ref{t:simulation_EP} we can identify the Connected Components or a Spanning Forest of $G$ in $\bigO(\log n)$ rounds of \ncclique. The results of those algorithms are as described in the Corollary \ref{c:ep_sf}.
  
\end{proof}

\subsection{Other Communication Primitives}
In this subsection we provide some efficient communication primitives for communication inside disjoint sets of nodes. In particular we provide the protocols solving the \emph{multicast} and \emph{aggregate} problems. More precisely we define the \emph{multicast} problem in the following way.
\begin{problem}
We are given a partition of nodes into $k$ disjoint sets $X_1, X_2, \dots, X_k$, and the set of nodes $v_1, v_2, \dots, v_k$. For each $i$, $v_i \in X_i$ and all members of set $X_i$ know the ID of $v_i$. The node $v_i$ has in local memory a message $m_i$ of size $c \leq \frac{\log^2 n}{2}$ bits. As a result, for all $i$ we require that all members of $X_i$ know $m_i$.
\end{problem}

\begin{theorem}
\label{t:multicast}
  It is possible to solve the multicast problem in $\bigO\paren{\frac{\log n}{2\log \log n - \log c}}$.
\end{theorem}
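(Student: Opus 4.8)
The plan is to build, inside every $X_i$ simultaneously, a balanced tree rooted at $v_i$ and to pipeline $m_i$ from the root down to the leaves one level at a time. The degree $b$ of the trees is chosen as large as the bandwidth permits: since $m_i$ occupies $\ceil{c/\Theta(\log n)}$ atomic messages and a node may send only $\bigO(\log n)$ atomic messages per round, take $b=\Theta(\log^2 n/c)$ when $c=\Omega(\log n)$ and $b=\Theta(\log n)$ otherwise, so that a node can forward $m_i$ to all of its at most $b$ children, and receive $m_i$ from its single parent, with $\bigO(\log n)$ atomic messages in total; by Theorem~\ref{t:routing} one such level step costs $\bigO(1)$ rounds. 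The hypothesis $c\le\log^2 n/2$ ensures $b\ge 2$, so a $b$-ary tree on at most $n$ nodes has depth $\bigO(\log_b n)=\bigO\paren{\frac{\log n}{2\log\log n-\log c}}$ in either regime; this is the number of level steps, hence the cost of the propagation phase.

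The work is in setting the trees up, since initially a node knows only its own ID and the ID $v_i$ of its root. First I would relabel the nodes by sorting the $n$ keys $(\mathrm{ID}(v_i),\mathrm{ID}_i)$ (one per node, with $\mathrm{ID}_i$ as payload) via Corollary~\ref{c:sorting}, arranging both that every node learns the rank of its key and that the node with ID $r$ ends up holding the $r$-th smallest key; this costs $\bigO(\log n/\log\log n)$ rounds. Now the members of each $X_i$ occupy a contiguous block of ranks, rank equals physical ID, and a node can address any given rank without being told which physical node holds it. Next every node computes the first rank and the size of its own block; after an $\bigO(1)$-round step in which each node learns its predecessor's key (hence whether it begins a new block), these are segmented prefix computations over $n$ consecutive values, which I would run inside the simulated \BSP model (Theorem~\ref{t:simulation_BSP}) with $\Theta(\log n)$-ary aggregation trees in $\bigO(\log n/\log\log n)$ rounds. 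From its position $p$ inside its block each node then computes the ranks, hence IDs, of its tree-parent at block-position $\lfloor (p-1)/b\rfloor$ and of its (at most $b$) tree-children $bp+1,\dots$ that still lie within the block, so nothing has to be learned from anyone. Finally $m_i$ is routed in $\bigO(1)$ rounds from $v_i$ to the node of rank equal to the start of $v_i$'s block, which becomes the tree root.

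With the trees in place, the propagation phase pushes $m_i$ down level by level: for $t=0,1,\dots$, every node at depth $t$ sends $m_i$ to each of its children, one level per $\bigO(1)$ rounds by the accounting of the first paragraph. After $\bigO\paren{\frac{\log n}{2\log\log n-\log c}}$ level steps every node of every block knows the message of its block, and a final $\bigO(1)$-round routing step delivers $m_i$ from each rank back to the original node whose key landed at that rank, so the members of $X_i$, in their original identities, know $m_i$. The total is $\bigO(\log n/\log\log n)$ for the set-up plus $\bigO\paren{\frac{\log n}{2\log\log n-\log c}}$ for the propagation; since $c\le\log^2 n/2$ gives $2\log\log n-\log c\le 2\log\log n$, we have $\frac{\log n}{2\log\log n-\log c}=\Omega(\log n/\log\log n)$, so the set-up term is absorbed and the claimed bound follows.

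The step I expect to be the main obstacle is this set-up, not the propagation. The degree $b$ may be as large as $\Theta(\log^2 n)$ (when $c=\bigO(1)$), so any scheme in which an internal node first collects the IDs of its $\Theta(\log^2 n)$ children would funnel $\Theta(\log^3 n)$ bits into a single node and cannot run in $\bigO(1)$ rounds per level; the relabeling — making ranks literally equal to physical IDs so that neighbour IDs are computed arithmetically rather than communicated — is exactly what keeps the construction within the per-round budget. The secondary point to check is that the auxiliary prefix computations locating block boundaries really fit inside the $\bigO(\log n/\log\log n)$ sorting budget, which they do through the \BSP simulation.
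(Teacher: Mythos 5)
Your proof is correct and reaches the stated bound, and while it follows the paper's high-level plan -- sort by the leader's ID so that each $X_i$ becomes a contiguous block, build balanced trees of the largest bandwidth-feasible degree, and push $m_i$ down one level per $\bigO(1)$ rounds via Theorem~\ref{t:routing} -- the way you realize the trees is genuinely different. The paper roots each tree at $v_i$, places the members of $X_i$ at the leaves, and fills the internal levels with \emph{virtual} nodes obtained from the simulation theorem; this forces it to first compute the sizes of all sets and a prefix-sum assignment of virtual nodes to sets over a global $\Theta(\log n)$-ary communication tree. You avoid virtual internal nodes altogether: by arranging rank $=$ physical ID after sorting, each block carries an implicit heap-shaped $b$-ary tree whose parent and child IDs are computed arithmetically, at the price of a segmented prefix computation for block boundaries and two extra $\bigO(1)$-round permutation steps (injecting $m_i$ at the block root, and delivering the message back to the original identities at the end). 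Your degree bookkeeping is, if anything, more careful than the paper's: capping $b$ at $\Theta(\log n)$ when $c \in \bigO(\log n)$ respects the $\bigO(\log n)$-messages-per-node limit (a node cannot address $\Theta(\log^2 n)$ distinct children in $\bigO(1)$ rounds, which the uniform choice $\Theta\paren{\log^2 n / c}$ would demand for constant $c$), and this costs only the hidden factor $2$ in the denominator, so the claimed bound is unaffected. One small detail to patch: you route $m_i$ from $v_i$ to the first rank of its block, but $v_i$ does not yet know that rank; let the block-start node, which can read $ID(v_i)$ off its key, first send its own rank to $v_i$ -- one message sent and one received per node, hence $\bigO(1)$ extra rounds by Theorem~\ref{t:routing}.
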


\begin{proof}
  The algorithm solving the multicast problem is as follows. 
  
  Firstly we sort all nodes by its 'special' node ID. After this, we have that nodes with the same 'special' node, are known by a single node or by a sequence of nodes with consecutive IDs. 
  
  Now, we can organize the nodes into a communication tree, starting from the construction of the leaves of this tree, and going up to the root. The leaves of this tree are all nodes that hold the result of the sorting algorithm. In order to create a higher level of the communication tree, at the highest existing level we do the following: 
  \begin{itemize}
  \item divide the nodes into groups of size $\Theta(\log n)$,
  \item in each group assign a node with the smallest ID to be a parent of all nodes in this group (in particular this node is also its own parent in this communication tree)
  \end{itemize}
  
  Since the degree of a internal node in this tree is $\Theta(\log n)$, whole communication tree has $\bigO(\log n / \log \log n)$ levels. Using this tree we can compute frequency of each 'special' ID, using simple dynamic programming in bottom-up fashion. Then, we can propagate the results to leaves in top-down fashion. Since depth of a tree is $\bigO(\log n / \log \log n)$, and any two levels can communicate in a $\bigO(1)$ rounds, whole counting procedure requires $\bigO(\log n / \log \log n)$ rounds.
  
  At this point each member of $X_i$ knows the size of each $X_i$. The idea is to build a separate communication tree for each disjoint set and special node. Since we can simulate a \ncclique with $\bigO(n)$ nodes in $\bigO(1)$ rounds, we can create some number of virtual nodes in order to construct those communication trees.
  
  Since we know the sizes of $X_i$, and the size of a message for the multicast problem, we can compute the number of internal nodes in a tree such that it's internal degree is as large as possible (to provide smaller depth), but a layer of the tree still can downcast the message to the lower level in $\bigO(1)$ rounds. The largest possible degree for a message of size $\bigO(c \log n)$, in order to use Theorem \ref{t:routing} for communication in $\bigO(1)$ rounds, is $\Theta\paren{\frac{\log^2 n}{c}}$. Therefore the depth of the tree is $\bigO\paren{\frac{\log n}{\log \frac{\log^2 n}{c}}} = \bigO\paren{\frac{\log n}{2\log \log n  - \log c}}$.
  
  Now we discuss, how to actually assign the virtual nodes to sets $X_i$ in such a way that we do not assign a single virtual node to multiple sets. To achieve this, we again use the communication tree already utilized for computing the sizes of sets $X_i$. In particular, for each 'special' ID, it is enough to know the number of virtual nodes assigned to all 'special' nodes with smaller ID. 
  
  Let say that for a node $v$ inside a communication tree, a $v$-prefix is the largest set of pairs $\langle$node-ID, special node ID$\rangle$, such that the largest member (in the sequence sorted by the second coordinate) of this set is in the leaf of the subtree of $v$. In order to achieve our goal, it is enough for each node $v$ to 
  \begin{itemize}
  \item compute the number of virtual nodes required for each rooted subtree of the communication tree (bottom up), 
  \item communicate to the $i$th child, the total number of virtual nodes in the prefix of $v$ decreased by number of virtual nodes in subtrees of children $\set{i, i+1, \dots}$(top down).
  \end{itemize}
  
  Since we have a set of virtual nodes assigned to each disjoint set, we can finally use them for building the communication trees. More precisely, we want $v_i$ as the root and members of $X_i$ as leaves, and assigned virtual nodes as internal nodes of the tree. Now we can finally use the tree for multicasting messages from $v_i$ to members of $X_i$, for all $i$ simultaneously. Since the depth of the tree is $\bigO\paren{\frac{\log n}{2\log \log n  - \log c}}$, Theorem \ref{t:multicast} is correct.
\end{proof}

Furthermore, if we execute the part in which we send the message from the root to all leaves in a \emph{pipelined fashion}, we can actually send multiple messages in the time longer by a constant factor. Since, while sending the message $m_i$ from root to leaves, we use only two consecutive levels of the communication tree at single moment, we can start sending an another message as soon as root and its children are free. Thus, we have the following Theorem:
\begin{theorem}
\label{t:pipelined_multicast}
  It is possible to solve the multicast problem for single partition into sets and $x$ messages per 'special' node in $\bigO\paren{x + \frac{\log n}{2\log \log n - \log c}}$.
\end{theorem}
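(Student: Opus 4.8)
The plan is to reuse, essentially verbatim, the communication trees built in the proof of Theorem~\ref{t:multicast}, and then to run the top-down delivery of the $x$ messages along those trees in a pipelined (systolic) fashion. First I would run the whole preprocessing of Theorem~\ref{t:multicast} once: sort the nodes by the ID of their special node, build the auxiliary $\Theta(\log n)$-degree counting tree, compute the sizes $|X_i|$, and allocate and attach the virtual internal nodes so that for every $i$ we obtain a balanced tree $T_i$ with root $v_i$, leaves $X_i$, internal degree $\Theta(\log^2 n/c)$, and hence depth $D=\bigO\paren{\frac{\log n}{2\log\log n-\log c}}$. This preprocessing costs $\bigO(\log n/\log\log n)$ rounds (sorting via Corollary~\ref{c:sorting}, plus $\bigO(1)$ per level of the counting tree for the size and prefix-sum computations), and $\bigO(\log n/\log\log n)=\bigO(D)$ because $2\log\log n-\log c\le 2\log\log n$ forces $D=\Omega(\log n/\log\log n)$; crucially it is done only once, independently of $x$. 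All the virtual nodes needed live on an $\bigO(n)$-node \ncclique, simulated in $\bigO(1)$ rounds per round by Theorem~\ref{t:simulation}.

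Then I would schedule the downcast. Recall that in Theorem~\ref{t:multicast} moving one message from $v_i$ to $X_i$ is done in $D$ phases, where phase $j$ moves the message from level $j-1$ to level $j$, and each such phase costs $\bigO(1)$ rounds because the degree of $T_i$ was chosen precisely so that one tree level can forward a $c$-bit message to the next level in $\bigO(1)$ rounds via Theorem~\ref{t:routing}. For the pipelined version I would run $x+D-1$ \emph{global phases}; in global phase $t$, for every level index $j$ with $1\le j\le D$ and $1\le t-j+1\le x$, and in all trees $T_i$ simultaneously, the nodes at level $j-1$ forward message $m_i^{(t-j+1)}$ to their children at level $j$. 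To keep each global phase at $\bigO(1)$ rounds despite a node at level $j$ being, within the same global phase, the destination of $m_i^{(t-j+1)}$ and the source of $m_i^{(t-j)}$, I would split a global phase into two sub-phases according to the parity of $j$ (a brick-wall schedule): in each sub-phase every node performs a single ``forward to all my children'' operation, which is $\bigO(1)$ rounds by the degree bound and Theorem~\ref{t:routing}. This is exactly the observation flagged before the theorem statement, that the downcast uses only two consecutive levels of a tree at any moment.

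Correctness of the schedule follows by a straightforward induction on $t$: at the end of global phase $t$, message $m_i^{(s)}$ has reached level $t-s+1$ of $T_i$ for every $s$ with $s\le\min(t,x)$, where ``level $>D$'' means it has already been delivered to all of $X_i$. In particular, after global phase $x+D-1$, message $m_i^{(x)}$ has just reached level $D$ (the leaves) and each $m_i^{(s)}$ with $s<x$ was delivered earlier, so every member of $X_i$ knows all of $m_i^{(1)},\dots,m_i^{(x)}$. The total round count is the $\bigO(D)$ preprocessing plus $\bigO(1)$ per global phase over $x+D-1$ global phases, i.e.\ $\bigO\paren{x+\frac{\log n}{2\log\log n-\log c}}$, as claimed. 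I expect the only mildly delicate point to be the $\bigO(1)$-rounds-per-global-phase claim — one must check that the two adjacent message indices touching a given level, together with the chosen internal degree $\Theta(\log^2 n/c)$, still fit within the $\bigO(\log n)$-message send/receive budget of a \ncclique node — and this is precisely what the parity splitting plus the degree choice of Theorem~\ref{t:multicast} provide.
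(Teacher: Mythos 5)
Your proposal is correct and takes essentially the same route as the paper: the paper justifies Theorem~\ref{t:pipelined_multicast} precisely by the observation that the downcast of Theorem~\ref{t:multicast} occupies only two consecutive levels of a communication tree at any moment, so the $x$ messages can be pipelined behind one another, with the one-time tree construction absorbed into the $\bigO\paren{\frac{\log n}{2\log \log n - \log c}}$ term. You merely flesh out the sketch (explicit brick-wall schedule, induction on global phases, and the accounting that the preprocessing is $\bigO(\log n/\log\log n)=\bigO(D)$ and done once), which is more detail than the paper itself provides.
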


\begin{corollary}
  It is possible to multicast $k$ bits message in 
  \begin{itemize}
    \item $\bigO\paren{\frac{k}{\log^2 n}}$ rounds of the \ncclique model, for $k \geq \log^3 n$,
    \item $\bigO\paren{\frac{\log n }{\epsilon \log \log n}}$ rounds of the \ncclique model, for $k = \frac{\log^{3-\epsilon}n}{\epsilon \log \log n}$,\\
    for $\frac{1}{\log \log n} < \epsilon < 2$.
  \end{itemize}
\end{corollary}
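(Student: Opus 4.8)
The plan is to reduce the statement to the pipelined multicast bound of Theorem~\ref{t:pipelined_multicast} by slicing the $k$-bit payload held at each special node into equal pieces. Fix an integer chunk length $c$ with $1 \le c \le \frac{\log^2 n}{2}$, cut the $k$-bit message into $x = \ceil{k/c}$ messages of at most $c$ bits each, and invoke Theorem~\ref{t:pipelined_multicast} once; this delivers all pieces to all members of every $X_i$ in $\bigO\paren{\frac{k}{c} + \frac{\log n}{2\log\log n - \log c}}$ rounds (the ceiling only costs an additive $1$ in the first term). So the whole proof reduces to picking $c$ that balances the two terms while staying within the admissible range $c \le \frac{\log^2 n}{2}$, plus checking a couple of rounding points.

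For the first bullet, in the regime $k \ge \log^3 n$ I would take $c = \lfloor \frac{\log^2 n}{2} \rfloor$. Then $\log c \le 2\log\log n - 1$, so $2\log\log n - \log c \ge 1$ and the depth term is $\bigO(\log n)$, while the first term is $\bigO(k/\log^2 n)$ since $c = \Theta(\log^2 n)$. Because $k \ge \log^3 n$ implies $k/\log^2 n \ge \log n$, the first term dominates and the bound is $\bigO(k/\log^2 n)$; note $x \ge 1$ is automatic as $k \ge \log^3 n \ge c$.

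For the second bullet I would take $c = \lfloor \log^{2-\epsilon} n \rfloor$. This is a positive integer since $\epsilon < 2$, and it satisfies $c \le \log^{2-\epsilon} n \le \frac{\log^2 n}{2}$ exactly because $\epsilon > 1/\log\log n$ forces $\log^{\epsilon} n \ge 2$ --- this is the one place the lower bound on $\epsilon$ is used. From $\log c \le (2-\epsilon)\log\log n$ we get $2\log\log n - \log c \ge \epsilon\log\log n$, so the depth term is at most $\frac{\log n}{\epsilon\log\log n}$; and from $c \ge \frac12\log^{2-\epsilon} n$, substituting $k = \frac{\log^{3-\epsilon} n}{\epsilon\log\log n}$, the first term is $\ceil{k/c} \le \frac{2\log n}{\epsilon\log\log n} + 1 = \bigO\paren{\frac{\log n}{\epsilon\log\log n}}$ (using $\frac{\log n}{\epsilon\log\log n} \ge \frac{\log n}{2\log\log n} \ge 1$). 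Summing the two terms gives $\bigO\paren{\frac{\log n}{\epsilon\log\log n}}$.

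I do not expect a genuine obstacle: all the work sits in Theorem~\ref{t:pipelined_multicast}, and the corollary is just two instantiations of it with $c$ of order $\log^2 n$ and of order $\log^{2-\epsilon} n$. The only things to watch are that $c$ must be a legal message length (integer and at most $\frac{\log^2 n}{2}$), which is precisely why the second bullet needs $\epsilon > 1/\log\log n$, and that replacing $\log^{2-\epsilon} n$ by its floor only shrinks $c$ --- which can only improve the depth term and costs at most an additive $1$ and a factor $2$ in the first term --- so the stated asymptotics are unaffected.
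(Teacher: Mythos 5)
Your proposal is correct and follows essentially the same route as the paper: split the $k$-bit message into chunks of size $\Theta(\log^2 n)$ (first regime) or $\Theta(\log^{2-\epsilon} n)$ (second regime) and invoke Theorem~\ref{t:pipelined_multicast} with the corresponding $x$ and $c$. Your extra bookkeeping (floors, the check that $c \leq \frac{\log^2 n}{2}$ via $\epsilon > 1/\log\log n$, and verifying which term dominates) only makes explicit what the paper leaves implicit.
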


\begin{proof}
  For $k \geq \log^3 n$ we can split the $k$ bit message into $2k / \log^2 n$ parts of size $\frac{\log^2 n}{2}$, and use Theorem \ref{t:pipelined_multicast} with $x = \frac{k}{\log^2 n}$, and $c = \frac{\log^2 n}{2}$. For $k = \frac{\log^{3-\epsilon} n}{\epsilon \log \log n}$ and $\frac{1}{\log \log n} < \epsilon < 2$, we can split the message into $\Theta\paren{\frac{\log n}{\epsilon \log \log n}}$ parts of size $\log^{2-\epsilon} n$ and use Theorem \ref{t:pipelined_multicast} with $x = \frac{\log n}{\epsilon \log \log n}$, and $c \in \Theta(\log^{2-\epsilon} n)$
\end{proof}

\begin{corollary}
\label{c:shared_randomness}
  It is possible to provide shared randomness of size $\Theta(k)$ bits in 
  \begin{itemize}
    \item $\bigO\paren{\frac{k}{\log^2 n}}$ rounds of the \ncclique model, for $k \geq \log^3 n$,
    \item $\bigO\paren{\frac{\log n}{\epsilon \log \log n}}$ rounds of the \ncclique model, for $k = \frac{\log^{3-\epsilon} n}{\epsilon \log \log n}$,\\
    for $\frac{1}{\log \log n} < \epsilon < 2$.
  \end{itemize}
\end{corollary}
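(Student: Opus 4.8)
The plan is to reduce the task to the multicast primitive we have just analyzed. Designate the node with ID $1$ as the source; using its private randomness it samples a string $R$ consisting of $k$ independent uniform bits. It then remains only to deliver $R$ to every node, and this is precisely an instance of the multicast problem with the trivial partition: a single set $X_1 = \set{1,\dots, n}$, distinguished node $v_1 = 1$, and message $m_1 = R$. Consequently the two claimed round bounds follow verbatim from the preceding corollary on multicasting a $k$-bit message --- equivalently, by splitting $R$ into blocks of size $\Theta(\log^2 n)$ (resp. $\Theta(\log^{2-\epsilon} n)$) and invoking Theorem \ref{t:pipelined_multicast} with the corresponding values of $x$ and $c$, exactly as in that proof.

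Concretely, first I would build the communication tree as in the proof of Theorem \ref{t:multicast} for the single set $\set{1,\dots, n}$ rooted at node $1$; with only one set this is even simpler than the general case, since no cross-set accounting of virtual nodes is required. Then node $1$ feeds the successive blocks of $R$ into the pipeline of Theorem \ref{t:pipelined_multicast}, so that after $\bigO(k/\log^2 n)$ rounds (resp. $\bigO(\log n/(\epsilon\log\log n))$ rounds) every node holds $R$. Finally I would note that a string sampled uniformly at random by one node and subsequently known identically to all nodes is, by definition, a source of $\Theta(k)$ bits of shared randomness.

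I do not expect a genuine obstacle here: the statement is an immediate corollary of the multicast bounds, and only two minor points need checking. First, node $1$ must stay within the model's $\tilde\bigO(n)$ per-round local-computation budget; this is not an issue, because in the pipelined schedule node $1$ only has to produce $\bigO(\log^2 n)$ fresh random bits in each round, so it can generate $R$ incrementally rather than all at once. Second, one should be content with $\Theta(k)$ rather than exactly $k$ shared bits, which is all the downstream applications (such as seeding the $k$-wise independent families used in the \MSF algorithm) require; this too is immediate.
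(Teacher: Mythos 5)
Your proposal is correct and matches the paper's (implicit) argument exactly: the paper gives no separate proof for this corollary, treating it as immediate from the preceding multicast corollary, and its later use (node $1$ generating the bits locally and then announcing them, as in the proof of Theorem \ref{t:k_recovery}) is precisely your construction of sampling at a single source and multicasting via the pipelined protocol with the trivial one-set partition. Nothing further is needed.
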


The \emph{aggregate} problem is complementary in some sense to the \emph{multicast} problem - instead of spreading information we have to gather information. More precisely, we define the \emph{aggregate} problem in the following way.

\begin{problem}
We are given a partition of nodes into $k$ disjoint sets $X_1, X_2, \dots, X_k$, and the set of nodes $v_1, v_2, \dots, v_k$. For each $i$, $v_i \in X_i$ and all members of set $X_i = \set{x^i_1, \dots, x^i_j}$ know the ID of $v_i$. For each $i$ and $j$, a node $x^i_j$ has some private input $y^i_j$. The goal is, for a given function $f$, to compute the value of $f\left(\set{y^i_1, \dots, y^i_j}\right)$, under assumption that $f$ has following properties:
\begin{itemize}
\item $f$ is an distributive aggregative function
\item value of $f$ can be encoded on $c \leq \frac{\log^2 n}{2}$ bits
\end{itemize}
As a result, for all sets we require that $v_i$ knows the value of $f\left(\set{y^i_1, \dots, y^i_j}\right)$.
\end{problem}

Where, by an \emph{distributive aggregative function} we understand exactly the same thing as the authors of \cite{Augustine_2018_NCC}:
\begin{definition}
The function $f$ is a distributive aggregative function when:
\begin{itemize}
  \item it maps a value of a multiset $S$ to some value $f(S)$
  \item exists function $g$, such that for any partition of the multiset $S$ into multisets $\set{S_1, \dots, S_l}$, $f(S) = g\left( \set{ f(S_1),f(S_2), \dots, f(S_l) }\right)$.
\end{itemize}
\end{definition}

\begin{theorem}
\label{t:aggregate}
  It is possible to solve aggregate problem in $\bigO\paren{\frac{\log n}{2\log \log n - \log c}}$.
\end{theorem}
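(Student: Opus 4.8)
The plan is to reuse, essentially verbatim, the communication-tree construction from the proof of Theorem~\ref{t:multicast}, only reversing the direction in which information flows. First I would sort all nodes by the ID of their distinguished node $v_i$ (Corollary~\ref{c:sorting}), so that the members of each $X_i$ form a contiguous block; these nodes become the leaves of a forest of balanced trees, one tree per set $X_i$. Exactly as in the multicast proof, I would (i) compute the sizes $|X_i|$ by a bottom-up dynamic program over the sorting tree, and (ii) choose the internal degree of each per-set tree to be the largest $d = \Theta\paren{\frac{\log^2 n}{c}}$ for which one whole level of a tree can exchange its data with the level above in $\bigO(1)$ rounds via Theorem~\ref{t:routing}: an internal node of degree $d$ receives $d$ messages of $c$ bits, i.e.\ $\bigO(\log^2 n)$ bits, which is $\bigO(\log n)$ messages of $\bigO(\log n)$ bits. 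This fixes the depth of every tree at $\bigO\paren{\frac{\log n}{\log(\log^2 n/c)}} = \bigO\paren{\frac{\log n}{2\log\log n - \log c}}$. The internal nodes are realized as virtual nodes, available by Theorem~\ref{t:simulation}, and the disjoint assignment of virtual nodes to the various $X_i$ is carried out by the same prefix-sum computation over the sorting tree used in the multicast proof (each $v_i$ only needs the total count of virtual nodes charged to sets with smaller distinguished ID), so no virtual node is claimed twice.

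Then I would run the aggregation bottom-up. Each leaf $x^i_j$ computes $f(\set{y^i_j})$ locally and sends it to its parent. An internal node $u$ with children $u_1,\dots,u_d$, upon receiving their values $v(u_1),\dots,v(u_d)$, computes $v(u) = g\paren{\set{v(u_1),\dots,v(u_d)}}$, where $g$ is the combiner guaranteed by the definition of a distributive aggregative function. Writing $S_u$ for the multiset of leaf inputs in the subtree rooted at $u$, a straightforward induction on the height of $u$ gives $v(u) = f(S_u)$: the base case is a leaf, and for an internal node the distributive property yields $f(S_u) = f\paren{\bigcup_t S_{u_t}} = g\paren{\set{f(S_{u_1}),\dots,f(S_{u_d})}} = g\paren{\set{v(u_1),\dots,v(u_d)}}$. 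In particular the root of tree $i$, which is $v_i$, ends with $v(v_i) = f(\set{y^i_1,\dots,y^i_j})$, as required. Each level-to-level transmission is one application of Theorem~\ref{t:routing}, and there are $\bigO\paren{\frac{\log n}{2\log\log n - \log c}}$ levels, so the whole protocol runs in $\bigO\paren{\frac{\log n}{2\log\log n - \log c}}$ rounds; the preprocessing (sorting and prefix sums over the sorting tree) costs $\bigO(\log n/\log\log n)$ rounds, which is always dominated by the tree depth since $2\log\log n - \log c \le 2\log\log n$.

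The points that need care — and where most of the write-up would go — are minor. First, one must check that all intermediate values $v(u) = f(S_u)$ are themselves in the range of $f$ and hence still encodable on $c$ bits; this is immediate, since the identity $f(S) = g(\dots)$ exhibits every output of $g$ as an $f$-value, so the per-level routing really does move $\bigO(\log n)$ messages of $\bigO(\log n)$ bits. Second, the degenerate cases — $|X_i| = 1$ (then $v_i$ already holds the answer), $v_i$ not being the smallest ID in its block, or $|X_i|$ not a power of $d$ — are handled exactly as in the multicast proof, by designating $v_i$ as the forced root and padding the top of the tree. Third, one should note that the virtual-node bookkeeping is literally the same computation as in the multicast proof, so it does not inflate the round count. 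None of these is a genuine obstacle: the substance of the argument is entirely the (reversed) multicast construction together with the inductive correctness of the hierarchical application of $g$.
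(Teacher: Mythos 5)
Your proposal is correct and follows essentially the same route as the paper, which proves the theorem by reusing the communication-tree construction from Theorem~\ref{t:multicast} and simply reversing the flow of information, combining values of $f$ bottom-up with the function $g$. Your write-up just makes explicit the details (tree construction, inductive correctness of the hierarchical application of $g$, and the $c$-bit size of intermediate aggregates) that the paper leaves implicit by reference to the multicast proof.
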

\begin{proof}
  The algorithm for the multicast problem, is almost the same as for multicast - the difference is that we use constructed communication tree in order to combine values of $f$ bottom-up, instead of sending the message top-down.
\end{proof}

Furthermore, we can also define the pipelined version of the aggregate function, that is summarized by the following Theorem:
\begin{theorem}
\label{t:pipelined_aggregate}
  It is possible to solve the aggregate problem for a single partition into sets and $x$ functions per set in $\bigO\paren{x + \frac{\log n }{2\log \log n - \log c}}$.
\end{theorem}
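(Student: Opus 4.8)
The plan is to mirror the one-paragraph argument used for Theorem~\ref{t:pipelined_multicast}, now running the aggregation of Theorem~\ref{t:aggregate} in a pipelined fashion over the very same communication tree. First I would reuse, essentially verbatim, the tree construction from the proof of Theorem~\ref{t:multicast}: for each set $X_i$ we build a balanced tree with $v_i$ at the root, the members of $X_i$ at the leaves, and virtual nodes (obtained via Theorem~\ref{t:simulation}) as internal nodes, where the internal degree is chosen to be $\Theta\paren{\log^2 n / c}$ so that any two consecutive levels can exchange messages of size $\bigO(c)$ in $\bigO(1)$ rounds by Theorem~\ref{t:routing}, giving depth $d = \bigO\paren{\frac{\log n}{2\log \log n - \log c}}$. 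This setup — computing the sizes of the $X_i$ and allocating pairwise disjoint blocks of virtual nodes to the sets — is done once and costs $\bigO(d)$ rounds, independent of $x$.

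Next I would schedule the $x$ aggregations $f_1,\dots,f_x$ in waves. In wave $t$ (for $t = 1,\dots, x + d$), every internal node at level $\ell$, counted from the leaves, that has already received from all its children their partial value of $f_{t-\ell}$ combines those partials with its own contribution via $g$ and forwards the result to its parent; thus at any wave each node works on exactly one function. Correctness follows from the distributive-aggregative property exactly as in the non-pipelined case: the value a node forwards for $f_j$ is $f_j$ restricted to the leaves of its subtree, since $g$ lets us recombine the partial values computed by the children's subtrees under any grouping; hence $v_i$ eventually receives $f_j\paren{\set{y^i_1,\dots,y^i_j}}$ for every $j$.

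The round-complexity argument is then: each wave advances every in-flight function up one level, consecutive waves are offset by $\bigO(1)$ \ncclique rounds, and $f_x$ reaches the root after $\bigO(x + d)$ waves, i.e.\ $\bigO\paren{x + \frac{\log n}{2\log \log n - \log c}}$ rounds. The step I expect to be the main obstacle is verifying the bandwidth bound once all waves are in flight: I must check that each (virtual) node still sends and receives only $\bigO(\log n)$ messages per round. At a given wave a node receives partial values of a single function from its $\Theta(\log^2 n / c)$ children and sends one partial value to its parent, each message being $\bigO(c)$ bits, so one level-to-level step stays within what Theorem~\ref{t:routing} handles; summing over the constant number of physical rounds implementing one wave keeps the per-node message count $\bigO(\log n)$. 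Since after the virtual nodes are simulated away (Theorem~\ref{t:simulation}) a real node carries only $\bigO(1)$ virtual tree nodes per level, the total load remains $\bigO(\log n)$, and the theorem follows.
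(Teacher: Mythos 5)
Your proposal is correct and follows essentially the same route as the paper: reuse the communication-tree construction from the multicast/aggregate protocols (internal degree $\Theta(\log^2 n / c)$, depth $\bigO\paren{\frac{\log n}{2\log\log n - \log c}}$) and pipeline the bottom-up combination of the $x$ distributive aggregative functions in waves, so only two adjacent levels are busy per function at any time, giving $\bigO(x + \text{depth})$ rounds. In fact you spell out the wave scheduling and the per-round bandwidth check in more detail than the paper, which states this theorem by direct analogy with the pipelined multicast argument.
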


\noindent Also, we have an analogous corollary. Let say that a distributive aggregative function $f$ that has a result encodable on $k$ bits is \emph{$\alpha$-splittable}, if there exists functions $f_1, f_2, \dots, f_{\ceil{k/\alpha}}$ such that the result of $f$ is a concatenation of the results of $f_1, f_2, \dots, f_{\ceil{k/\alpha}}$, and for each $i$, $f_i$ is distributive aggregative.
\begin{corollary}
  It is possible to aggregate a result of a $\alpha$-splittable distributive aggregative function, for $\alpha \leq \frac{\log^2 n}{2}$ that has result encoded on $k$ bits in
  \begin{itemize}
    \item $\bigO\paren{\frac{k}{\log^2 n}}$ rounds of the \ncclique model, for $k \geq \log^3 n$,
    \item for $k = \frac{\log^{3-\epsilon} n}{\epsilon \log \log n}$ and $\frac{1}{\log \log n} \leq \epsilon \leq 2$ in: \begin{itemize}
      \item $\bigO\paren{\frac{k}{\alpha}}$ rounds of the \ncclique, for $\alpha \geq \log^{2-\epsilon} n$,
      \item $\bigO\paren{\frac{\log n } {\epsilon \log \log n}}$ rounds  of the \ncclique, for $\alpha \leq \log^{2-\epsilon} n$.
      \end{itemize}
  \end{itemize}
\end{corollary}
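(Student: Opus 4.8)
The plan is to mirror the proof of the analogous multicast corollary, substituting Theorem~\ref{t:pipelined_aggregate} for Theorem~\ref{t:pipelined_multicast}; the one genuinely new ingredient is that the object we chop up is a function rather than a bit string, so we must lean on the $\alpha$-splittability hypothesis. First I would record the easy closure fact that if $f_i,f_{i+1}$ are distributive aggregative with merge functions $g_i,g_{i+1}$, then $S\mapsto(f_i(S),f_{i+1}(S))$ is distributive aggregative with merge function $(g_i,g_{i+1})$ applied coordinatewise. Hence, starting from the given $\alpha$-splitting into $\ceil{k/\alpha}$ pieces, for any block size $c$ with $\alpha\le c\le\frac{\log^2 n}{2}$ we can regroup consecutive pieces into $\Theta(k/c)$ distributive aggregative functions (up to rounding), each with result on at most $c$ bits --- precisely the $x$-functions-per-set setting of Theorem~\ref{t:pipelined_aggregate}, which then costs $\bigO\paren{\frac{k}{c}+\frac{\log n}{2\log\log n-\log c}}$ rounds. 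The whole proof is then just the choice of $c$ in each regime.

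For $k\ge\log^3 n$ I would take $c=\frac{\log^2 n}{2}$ (admissible since $\alpha\le\frac{\log^2 n}{2}$): then $x=\Theta(k/\log^2 n)$ and $2\log\log n-\log c=1$, so the depth term is $\bigO(\log n)$, which is absorbed by $x\ge\log n$, giving the claimed $\bigO(k/\log^2 n)$ exactly as in the multicast case.

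For $k=\frac{\log^{3-\epsilon}n}{\epsilon\log\log n}$ the balanced choice is $c=\Theta(\log^{2-\epsilon}n)$, picked as close to $\log^{2-\epsilon}n$ as the block structure permits so that the hidden constant stays near $1$: then $x=\Theta(k/\log^{2-\epsilon}n)=\Theta\paren{\frac{\log n}{\epsilon\log\log n}}$ and $2\log\log n-\log c=\epsilon\log\log n-\bigO(1)=\Theta(\epsilon\log\log n)$, where I use $\epsilon\ge\frac{1}{\log\log n}$ to keep the denominator positive and of the right order; hence the depth term is also $\bigO\paren{\frac{\log n}{\epsilon\log\log n}}$ and the two terms balance. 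If $\alpha\le\log^{2-\epsilon}n$ this block size is attainable, giving $\bigO\paren{\frac{\log n}{\epsilon\log\log n}}$. If $\alpha\ge\log^{2-\epsilon}n$ the splitting cannot be refined below $\alpha$, so I would feed the $\ceil{k/\alpha}$ given pieces directly to Theorem~\ref{t:pipelined_aggregate} with $c=\alpha$; since then $k/\alpha\le\frac{\log n}{\epsilon\log\log n}$, the term $x=\ceil{k/\alpha}$ lies within the claimed $\bigO(k/\alpha)$.

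The step I expect to be the real obstacle is this last one: accounting the tree-depth term $\frac{\log n}{2\log\log n-\log\alpha}$ against $\frac{k}{\alpha}$ once the available granularity $\alpha$ is already coarse, and, throughout, keeping $2\log\log n-\log c$ bounded away from $0$ while keeping it large enough that the depth never dominates $x$ --- which is exactly what the caps $\alpha\le\frac{\log^2 n}{2}$ and $\epsilon\ge\frac{1}{\log\log n}$ are for. Everything else --- the closure remark, the regrouping, and the two invocations of Theorem~\ref{t:pipelined_aggregate} --- is a direct transcription of the multicast argument, and I would suppress the $\ceil{\cdot}$/$\lfloor\cdot\rfloor$ rounding and the constant-factor slack in the block count just as that proof does.
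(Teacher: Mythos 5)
Your proposal is essentially the paper's own proof: regroup the $\alpha$-pieces into chunks (your closure remark is exactly the implicit step the paper uses), then invoke Theorem~\ref{t:pipelined_aggregate} with chunk size $\frac{\log^2 n}{2}$ when $k\ge\log^3 n$, chunk size $\Theta(\log^{2-\epsilon}n)$ when $\alpha\le\log^{2-\epsilon}n$, and chunk size $\alpha$ when $\alpha\ge\log^{2-\epsilon}n$; your handling of the first two regimes matches the paper's and is fine. The one step you flag as the ``real obstacle'' --- bounding the tree-depth term $\frac{\log n}{2\log\log n-\log\alpha}$ by $\bigO\paren{\frac{k}{\alpha}}$ in the coarse-$\alpha$ case --- is not actually resolved in the paper either: its proof simply asserts $\bigO\paren{\frac{k}{\alpha}+\frac{\log n}{2\log\log n-\log\alpha}}=\bigO\paren{\frac{k}{\alpha}}$ without justification, and for $\alpha$ strictly between $\log^{2-\epsilon}n$ and $\frac{\log^2 n}{2}$ this simplification fails: take $\epsilon=1$, so $k=\frac{\log^{2}n}{\log\log n}$, and $\alpha=\log^{3/2}n$; then the depth term is $\Theta\paren{\frac{\log n}{\log\log n}}$ while $\frac{k}{\alpha}=\frac{\log^{1/2}n}{\log\log n}$, so the depth dominates. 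Hence your hesitation points at a genuine weakness of the stated bound (it is tight only at the boundary $\alpha=\Theta(\log^{2-\epsilon}n)$, or if one keeps the depth term explicit in the statement), not at a missing idea that the paper supplies; apart from making that caveat explicit, your argument is a faithful reconstruction of the intended proof.
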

\begin{proof}
For $k \geq \log^3 n$, since $f$ is $\alpha \leq \frac{\log^2 n}{2}$ splittable, we can split the result of $f$ into chunks of size $\frac{\log^2 n}{2}$. Then, we can use pipelined aggregate protocol, to aggregate each chunk as it is a separate distributive aggregative function. By Theorem \ref{t:pipelined_aggregate} it can be done in $\bigO\paren{\frac{k}{\log^2 n}}$ rounds.

For $k \leq \log^3 n$ the result depends on the parameter $\alpha$. If given function $f$ is $\alpha$ splittable, for $\alpha \leq \log^{2-\epsilon} n$, then we can split the result of $f$ into chunks of size $\bigO(\log^{2-\epsilon} n)$ bits, and by Theorem \ref{t:pipelined_aggregate}, with $x = \frac{k}{\log^{2-\epsilon} n}$, aggregate all chunks in $\bigO\paren{\frac{\log n}{\epsilon \log \log n}}$.
If given function $f$ is $\alpha$ splittable, for $\log^{2-\epsilon} n \leq \alpha \leq \frac{\log^2 n}{2}$,  then we can split the result of $f$ into chunks of size $\alpha$ bits, and by Theorem \ref{t:pipelined_aggregate} with $x = \frac{k}{\alpha}$ we aggregate all chunks in $\bigO\paren{\frac{k}{\alpha} + \frac{\log n}{2\log \log n - \log \alpha}} = \bigO\paren{\frac{k}{\alpha}}$.

\end{proof}

\subsection{Speed-up of the previously known algorithm}
In this subsection we provide a brief comment for the readers, who are familiar with the \MSF algorithm from the paper \cite{Augustine_2018_NCC}. 

The bottleneck of the algorithm seems to be that we require $\bigO(\log^2 n)$ random bits in order to generate $\log n$-wise independent hash functions. The authors provided communication primitives, which allowed multicasting among members of component $\Theta(\log^2 n)$ bits in $\bigO(\log n$) rounds, which implied that they can only generate $\bigO(1)$ hash functions per $\bigO(\log n)$ rounds. 

With our pipelined multicast procedure, some special node can announce in $\bigO(\log n)$ rounds $\Theta(\log^3 n)$ random bits. This allows for generating $\bigO(\log n)$ hash functions, that are $\log n$--wise independent, in $\bigO(\log n)$ rounds. Therefore, one can execute all sampling experiments simultaneously rather than sequentially. As far as we understand, this alone allows to solve the \MSF problem in $\bigO(\log^3 n)$ rounds in the \ncclique model.

\section{Edge sampling / recovery}
\label{s:recovery}
The algorithms we propose for the Spanning Forest and Minimum Spanning Forest problems rely on sparse recovery and graph sketching techniques \cite{Jowhari_2011_recovery, cormode_2014_sampling, Ahn_2012_sketches, Ghaffari_2016_MST_log*}. In this section we recall the results on graph sketching and sparse recovery, as well as provide some subroutines for edge sampling and recovery for the \ncclique model.

In subsection \ref{ss:knownresults} we recall the results of Jowhari et al. and Cormode and Firmani on $L_0$ sampling in the streaming model. In subsection \ref{ss:srecovery} we present application of computationally efficient $L_0$ sampler to the $k$-sparse recovery problem - we provide a protocol that requires $\bigO(\log^2 n \log \log n)$ bits of space and only $\bigO(\log n \ log \log n)$ time to recover all non zero coordinates, \whp. Then, in subsection \ref{ss:edgerecovery} , we recall a graph neighbourhood encoding by Ahn et al. and provide the algorithms for edge sampling (based on $L_0$ sampler) and $k$-edge recovery (based on $k$-sparse recovery) problems in the \ncclique model.

\subsection{Known results on $L_0$ sampling and $s$-sparse recovery}
\label{ss:knownresults}
We call a vector \emph{$s$-sparse}, when its support is no larger than $s$ (has at most $s$ non zero coordinates). The first lemma we recall concerns encoding and decoding $s$-sparse vectors using random linear functions. 
\begin{lemma} [Jowhari et al. in \cite{Jowhari_2011_recovery} claim that it is a well known result]\label{l:recovery}
\strut\\
For $1 \leq s \leq n$ and $k \in O(s)$ there is a random linear function\footnote{I feel like there is a silent assumption in \cite{Jowhari_2011_recovery} that by $\mathbb{R}$ we still mean real numbers encodable on $\bigO(\log n)$ bits, but sufficiently long to behave like real real numbers w.h.p. Even though there are no comments on this matter, results suggests that's the case} $L : \mathbb{R}^n \rightarrow \mathbb{R}^k$ (generated from $\bigO(k \log n)$ random bits ) and a recovery procedure that on input $L(x)$ outputs $x' \in \mathbb{R}^n$ or DENSE, satisfying that for any $s$-sparse $x$ the output is $x' = x$ with probability $1$, otherwise the output is DENSE
\whp.
\end{lemma}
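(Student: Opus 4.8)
The plan is to realise $L$ as a Reed--Solomon (Vandermonde) syndrome map augmented with one random polynomial fingerprint, and to recover with Prony's method / the Berlekamp--Massey algorithm. All arithmetic is over a finite field $\mathbb{F}_p$ with $p$ a prime of $\Theta(\log n)$ bits and $p>n$ --- this is the reading of the footnote's ``reals'' as field elements that fit on $\bigO(\log n)$ bits. Concretely I take $k=2s+1\in\bigO(s)$ and let the first $2s$ rows of $L$ be the deterministic map $x\mapsto\big(\sum_{i=1}^n i^t x_i\big)_{t=1}^{2s}$ (evaluation points $1,\dots,n$, all distinct and nonzero in $\mathbb{F}_p$), and the last row be $x\mapsto\sum_{i=1}^n z^i x_i$ with $z$ drawn uniformly from $\mathbb{F}_p$ using $\bigO(\log n)$ random bits (well within the allowed $\bigO(k\log n)$; one can append a few more independent fingerprints for a stronger bound). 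The recovery procedure reads the first $2s$ coordinates, runs Berlekamp--Massey to get the shortest linear recurrence generating them; if its order exceeds $s$, or its characteristic polynomial fails to split into distinct factors $(X-i)$ with $i\in\{1,\dots,n\}$, or the induced Vandermonde system is inconsistent with the $2s$ syndromes, it outputs DENSE; otherwise it obtains an $s$-sparse candidate $x'$, checks whether $\sum_i z^i x'_i$ equals the last coordinate of $L(x)$, and outputs $x'$ if so and DENSE otherwise.

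For $s$-sparse inputs I would invoke the classical exactness of Prony's method: if $x$ is supported on a set of size $\le s$, its syndrome sequence obeys a linear recurrence of order $\le s$ with characteristic polynomial $\prod_{i\in\mathrm{supp}(x)}(X-i)$; Berlekamp--Massey recovers this recurrence from $2s\ge 2|\mathrm{supp}(x)|$ terms, its roots reveal the support, and a Vandermonde solve returns the exact values, so $x'=x$. The fingerprint check then passes for \emph{every} $z$ because $x'=x$, hence the output is $x'=x$ with probability $1$, independently of the randomness.

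For a non-$s$-sparse $x$: if the deterministic stage already outputs DENSE we are done, so suppose it returns an $s$-sparse $x'$ that matches $x$ on the first $2s$ measurements. Since $x$ is not $s$-sparse and $x'$ is, $y:=x-x'\neq 0$, and the discrepancy in the last coordinate is $\sum_i z^i y_i$, a nonzero univariate polynomial in $z$ of degree $\le n$ over $\mathbb{F}_p$; by Schwartz--Zippel it vanishes with probability at most $n/p=1/\mathrm{poly}(n)$, so with high probability the check fails and DENSE is output. The point that needs care when writing this out is the independence: $x'$ is a deterministic function of the \emph{non-random} syndromes, hence a fixed vector once $x$ is fixed, so $y$ is fixed before $z$ is drawn --- exactly what makes the polynomial-identity bound apply, and the place where a careless argument would collapse.

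The main obstacle is not deep: it is (i) pinning down the finite-field model so that ``$L:\mathbb{R}^n\to\mathbb{R}^k$ with entries on $\bigO(\log n)$ bits'' is meaningful and Prony's method is literally applicable ($p>n$, distinct nonzero points), and (ii) making the ``DENSE versus wrong answer'' dichotomy airtight, i.e.\ verifying that every way the deterministic stage can fail to produce a genuine $s$-sparse preimage is detected and reported as DENSE, leaving the fingerprint collision above as the only bad event. The remaining ingredients --- correctness of Berlekamp--Massey, root finding, and the Vandermonde solve --- are standard and I would cite rather than reprove them.
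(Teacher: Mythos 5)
Your proposal is correct in substance, but it is worth noting that the paper does not prove this lemma at all: it is imported verbatim from Jowhari et al.\ as a ``well known'' fact, and the folklore argument behind that citation is quite different from yours --- one takes a (Vandermonde-type) matrix in which every $2s$ columns are linearly independent, so that $s$-sparse vectors are uniquely determined by their sketch, and recovers by exhaustive search over supports, with the randomness used only to certify the DENSE answer; this is precisely why the paper remarks that recovery under Lemma~\ref{l:recovery} costs $n^{\Theta(s)}$ time and then switches to Cormode--Firmani. Your route (Reed--Solomon syndromes, Prony/Berlekamp--Massey decoding, plus a random polynomial fingerprint to catch non-sparse inputs) proves the same statement and in fact yields polynomial-time recovery, so it is strictly more informative than the black-box version the paper uses; the fingerprint analysis is sound, and you correctly isolate the key point that the candidate $x'$ is determined by the deterministic syndromes before $z$ is drawn, so Schwartz--Zippel applies. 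Two caveats to make explicit if you write this up: (i) you are proving a finite-field variant --- $L$ is linear over $\mathbb{F}_p$ rather than over $\mathbb{R}$ --- which is consistent with the paper's footnote about $\bigO(\log n)$-bit entries but is formally a modification of the quoted statement (and requires $p$ to exceed both $n$ and the magnitude of the coordinates so that nonzero entries stay nonzero mod $p$); and (ii) choose $p$ to be a sufficiently large polynomial in $n$ (or use a constant number of independent fingerprints, as you suggest) so that the failure probability $n/p$ actually meets the ``with high probability'' requirement rather than a constant bound.
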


The sparse recovery technique from Lemma \ref{l:recovery} was used by Jowhari et al. as a building block of an $L_0$ sampler of size $\bigO(\log^2 n)$, that has a constant probability of success \cite{Jowhari_2011_recovery}. Ahn et al. use exactly this result in the paper introducing the graph sketching techniques for the streaming model \cite{Ahn_2012_sketches}. To our understanding the drawback of using the $s$-sparse recovery provided by Lemma \ref{l:recovery} is that to recover an $s$-sparse $n$-dimensional vector it requires time $n^{\Theta(s)}$.

In the paper by Cormode and Firmani \cite{cormode_2014_sampling} authors provide alternative $s$-sparse recovery technique that can be plugged into linear $L_0$ sampler construction. Their recovery algorithm requires $\bigO(s \log n \log \frac{s}{\delta})$ bits of space, and recovers $s$-sparse vectors with probability $1-\delta$ in time $\bigO(s \log \frac{s}{\delta})$. On the one hand it allows to construct $L_0$ sampler with a constant probability of success that requires only $\bigO(\log^2 n)$ bits of space (which matches the result of Jowhari et al. with perfect $s$-sparse recovery from Lemma \ref{l:recovery}), on the other hand if we want to use it directly to recover $\Theta(\log n)$-sparse vectors, \whp, we would need $\Theta(\log^3 n)$ bits. 

For our purpose, we use the version of $s$-sparse recovery provided by Cormode and Firmani \cite{cormode_2014_sampling}, as it provides linear $L_0$ sampler in $\bigO(\log^2 n)$ bits of space, with constant probability of success and reasonable time complexity of recovery. 

\begin{lemma} \cite{cormode_2014_sampling}
\label{l:sampling_eff}
There exists a linear random sampling data structure that returns random non zero coordinate of an $n$-dimensional vector with constant probability, and if it fails to recover the coordinate, with high probability returns FAIL. The data structure requires $\bigO(\log^2 n)$ bits of space, and the encoding / decoding protocols require $\bigO(\log n)$ time.
\end{lemma}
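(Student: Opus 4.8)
The plan is to recall the standard leveled construction of a linear $L_0$ sampler and to instantiate its ingredients in the parameter regime that yields the stated bounds. Fix an $\bigO(1)$-wise independent hash function $h:[n]\to[n]$, and for each level $j\in\{0,1,\dots,\lceil\log n\rceil\}$ say that coordinate $i$ \emph{survives} to level $j$ when the top $j$ bits of $h(i)$ are $0$, so that $i$ survives with probability $2^{-j}$; write $x^{(j)}$ for the restriction of $x$ to the coordinates that survive to level $j$. At every level I would store a linear $s$-sparse recovery sketch of $x^{(j)}$ for a suitable constant $s$ and constant failure probability $\delta$, using the computationally efficient recovery of Cormode and Firmani \cite{cormode_2014_sampling}: such a sketch occupies $\bigO(s\log n\log(s/\delta))=\bigO(\log n)$ bits, supports updates and decoding in $\bigO(s\log(s/\delta))=\bigO(1)$ time, and returns either a correct reconstruction of $x^{(j)}$ (with probability $1-\delta$) or a failure symbol. (One could instead plug in the perfect recovery of Lemma~\ref{l:recovery}, but its $n^{\Theta(s)}$ decoding time would violate the claimed time bound.) To make spurious reconstructions detectable I would additionally keep at each level a short linear fingerprint, e.g.\ $\sum_i r^i x^{(j)}_i \bmod q$ for a uniformly random $r\in\mathbb{Z}_q$ and a prime $q=\mathrm{poly}(n)$; this adds $\bigO(\log n)$ bits per level and lets one test a candidate vector against the true $x^{(j)}$ with error $\mathrm{poly}(n)^{-1}$.

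For decoding, I would scan the levels; at each level run the $s$-sparse recovery, and if it outputs a vector $x'$ that also passes the fingerprint test, output the surviving non-zero coordinate of $x'$ minimizing an independent hash $h'$ (which makes the reported coordinate near-uniform over the support); if no level passes, return FAIL. Correctness then splits into two parts. Let $t=\|x\|_0$; at level $j^\star=\lfloor\log t\rfloor$ we have $\mathbb{E}\,\|x^{(j^\star)}\|_0\in[1,2)$, and since $\|x^{(j^\star)}\|_0$ is a sum of $\bigO(1)$-wise independent indicators, a Paley--Zygmund bound gives $\prob{\|x^{(j^\star)}\|_0\ge 1}\ge\Omega(1)$ while Markov gives $\prob{\|x^{(j^\star)}\|_0>s}\le 2/s$, so for $s$ a large enough constant the event $1\le\|x^{(j^\star)}\|_0\le s$ has constant probability; on that event the sparse recovery succeeds with probability $1-\delta$ and the fingerprint passes, so the structure outputs a non-zero coordinate of $x$ with constant probability. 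In the other direction, any reported coordinate has passed the fingerprint test and so is a genuine non-zero coordinate of $x$ except with probability $\mathrm{poly}(n)^{-1}$, hence whenever the sampler fails to recover a correct coordinate it reports FAIL \whp. The space is $\bigO(\log n)$ levels times $\bigO(\log n)$ bits each, i.e.\ $\bigO(\log^2 n)$; an update touches all $\bigO(\log n)$ levels with $\bigO(1)$ work each, and decoding scans $\bigO(\log n)$ levels with $\bigO(1)$ work each, giving $\bigO(\log n)$ time for both encoding and decoding.

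The two things to watch are linearity and the FAIL guarantee. Linearity holds because the level subsampling, the $s$-sparse recovery sketches and the fingerprint are all linear maps of $x$, so the sketch of $x+y$ is the coordinate-wise sum of the sketches of $x$ and $y$ --- which is exactly what the later graph-sketching application requires. The FAIL guarantee is where the fingerprint, together with the ``correct-or-failure'' behaviour of the $s$-sparse recovery, does the work: it is the only thing preventing a spurious reconstruction from being reported as a valid coordinate. The constants $s$ and $\delta$ are fixed after the Paley--Zygmund/Markov estimate so as to push the overall success probability above a fixed constant; we do not optimize them. All of this is the argument of Cormode and Firmani \cite{cormode_2014_sampling}, specialized to constant $\delta$ (hence $s=\bigO(1)$), which is the regime that gives the $\bigO(\log^2 n)$-bit, $\bigO(\log n)$-time bounds claimed here.
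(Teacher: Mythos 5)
The paper does not prove this lemma at all: it is quoted directly from Cormode and Firmani \cite{cormode_2014_sampling}, and the surrounding discussion in Section~\ref{ss:knownresults} merely notes that their sampler is the Jowhari et al.\ leveled construction with a computationally efficient $s$-sparse recovery plugged in. Your reconstruction (geometric subsampling levels, a constant-size Cormode--Firmani sparse-recovery sketch plus a linear fingerprint per level, Paley--Zygmund/Markov at the critical level, FAIL otherwise) is exactly that cited construction specialized to constant failure probability, so it is correct and takes the same route the paper implicitly relies on.
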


Moreover, rather than directly apply the $s$-sparse recovery result of Cormode and Firmani \cite{cormode_2014_sampling}, we can use their version of $L_0$ sampler as a building block of $\bigO(\log n)$-sparse recovery data structure that requires only $\bigO(\log^2 n \log \log n)$ bits of space, succeeds with high probability and has a reasonable time complexity of recovery procedure. 

In other words, the $L_0$ sampler we use is as good as one provided by Jowhari et al. \cite{Jowhari_2011_recovery} and our $\bigO(\log n)$-sparse recovery data structure requires space larger only by $\Theta(\log \log n)$ factor than the one by Lemma \ref{l:recovery}, and both provide $\polylog(n)$ time sampling / recovery procedures.

\subsection{Efficient $k$-sparse recovery}
\label{ss:srecovery}
The approach is loosely based on degree sensitive sketching \cite{Ghaffari_2016_MST_log*}, the implementation of $k$-edge connectivity algorithm in the streaming model \cite{Ahn_2012_sketches}, and the $s$-sparse recovery for $s \in \bigO(1)$ \cite{cormode_2014_sampling}, and requires only $\bigO(\log^2 n \log k + k \log k \log n)$ bits of space, for recovering of $k$-sparse vectors.

\begin{lemma}
\label{l:recovery_eff}
There exists $k$-sparse recovery linear randomized data structure that
\begin{itemize}
  \item requires $\bigO(\log^2 n \log k + k \log k \log n)$ random bits for construction,
  \item has size $\bigO(\log^2 n \log k + k \log k \log n)$ bits, 
  \item allows to recover all non zero coordinates of $k$-sparse vector, \whp, in time $\bigO(\log n \log k + k \log k)$.
\end{itemize}
\end{lemma}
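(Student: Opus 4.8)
The plan is to combine a degree-sensitive family of $L_0$ samplers (in the spirit of Ghaffari and Parter \cite{Ghaffari_2016_MST_log*}) with the $k$-edge-connectivity sketch idea of Ahn et al. \cite{Ahn_2012_sketches}, using the constant-sparse $L_0$ sampler from Lemma \ref{l:sampling_eff} as the primitive. First I would build, for each level $j \in \set{0, 1, \dots, \ceil{\log k}}$, a collection of $\Theta(\log n)$ independent instances of the Cormode--Firmani $L_0$ sampler from Lemma \ref{l:sampling_eff}, each of which operates on the vector $x$ after a pairwise-independent subsampling of the coordinate set at rate $2^{-j}$. This is exactly a density-sensitive cascade: at the level $j$ with $2^j \in \Theta(\|x\|_0)$, each surviving copy sees $\Theta(1)$ nonzero coordinates, so by Lemma \ref{l:sampling_eff} it recovers one of them with constant probability, and with $\Theta(\log n)$ independent copies we recover such a coordinate \whp. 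The total space is $\Theta(\log k)$ levels $\times$ $\Theta(\log n)$ copies $\times$ $\Theta(\log^2 n)$ bits per copy of the plain sampler; to get the claimed $\bigO(\log^2 n \log k + k \log k \log n)$ rather than $\bigO(\log^3 n \log k)$ I would instead, at each level, replace the $\Theta(\log n)$ independent single-coordinate samplers by one $\Theta(\log n)$-sparse recovery structure built from $\Theta(\log n)$ copies of the $\bigO(1)$-sparse sampler only at the \emph{top} levels, and use a flat exact $\Theta(k)$-sparse recovery à la Lemma \ref{l:recovery} costing $\bigO(k \log k \log n)$ bits for the bottom; the two regimes together give the stated bound. (The $\log^2 n \log k$ term is the $\log k$ levels of $\bigO(\log^2 n)$-bit samplers used to drive iterative peeling; the $k \log k \log n$ term is the raw cost of representing and exactly decoding up to $k$ coordinates with $\bigO(\log k)$-fold redundancy for the union bound.)

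The recovery procedure is iterative peeling: repeatedly query the cascade to extract \emph{some} nonzero coordinate of the current residual vector, subtract it out (the structure is linear, so we can update the sketch of $x$ minus the recovered coordinates without re-reading the input), and repeat until the residual is recognized as empty. Because $x$ is $k$-sparse, after at most $k$ successful extractions the residual is zero; detecting emptiness is done by the FAIL signal of Lemma \ref{l:sampling_eff} across all levels. Each extraction costs $\bigO(\log n)$ time for the $\bigO(1)$-sparse decode plus $\bigO(\log k)$ for walking the $\bigO(\log k)$ levels, and there are $\bigO(k)$ extractions, plus the one-shot $\bigO(k \log k)$ decode of the flat bottom structure — giving total recovery time $\bigO(\log n \log k + k \log k)$ as claimed. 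Correctness \whp follows by a union bound: each of the $\bigO(k)$ peeling steps fails with probability $n^{-\Omega(1)}$ provided the hash seeds at the relevant level are fresh or drawn with enough independence, so I would allocate $\Theta(\log n)$ independent sampler copies per level precisely so that one step's failure probability is $n^{-c}$, and take $c$ large enough to absorb the $\bigO(k \log k) \le \polylog(n)$ (or, if $k$ can be larger, $\bigO(k)$) union-bound factor. Here I would quietly use that in all our applications $k \le n$, so $k$ contributes at most a polynomial factor to the failure probability.

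The main obstacle I anticipate is the adaptivity of the peeling loop: the coordinate recovered at step $t$ is not independent of the sketch's random seeds, so naively the $t$-th query is being asked about an input that depends on the randomness, and the constant-success guarantee of Lemma \ref{l:sampling_eff} is for a \emph{fixed} input. The standard fix — and the one I would spell out carefully — is to argue that the set $S$ of coordinates ultimately recovered is itself $k$-sparse and, crucially, that the \emph{multiset of residual vectors} that ever arises is a deterministic function of the (fixed, unknown) $k$-sparse input $x$ together with the extraction order, so one can union-bound over the at most $k \cdot \binom{\text{supp}(x)}{\le k}$ — really just $k \cdot 2^{k}$, hence $\polylog$ when $k = \bigO(\log n)$, or handled by a fresh batch of seeds per level when $k$ is larger — possible residuals. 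The cleanest route is the one used by Ahn et al.: partition the $\Theta(\log n)$ copies at level $j$ into $\Theta(\log n / \log k)$ \emph{epochs}, reveal one epoch per peeling step so that each query uses randomness independent of the residual it is applied to, and check that the total number of copies is still $\bigO(\log n)$ because each epoch can be a single copy when $k = \polylog(n)$ and the failure probability per step is boosted to $n^{-c}$ by the exact-recovery redundancy of the flat $\bigO(k)$-sparse structure otherwise. Getting the bookkeeping of epochs, levels, and the split between the $\log^2 n \log k$ and $k \log k \log n$ terms to line up exactly is where the real work lies; everything else is routine linearity-of-sketches and Chernoff/union-bound arithmetic.
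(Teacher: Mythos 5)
There is a genuine gap: your overall toolkit (density-sensitive truncation to $\bigO(\log k)$ levels, linearity-based subtraction, iterative peeling, fresh randomness to dodge adaptivity) is the right one, but the concrete construction you sketch does not deliver the claimed bounds, and you yourself defer the decisive accounting ("the bookkeeping \dots is where the real work lies"). The specific failure points are the replication scheme and the "flat bottom" structure. With $\Theta(\log n)$ copies of the Lemma~\ref{l:sampling_eff} sampler per level you correctly note the space is $\bigO(\log^3 n\log k)$, and your repair --- exact $\Theta(k)$-sparse recovery \`a la Lemma~\ref{l:recovery} at the bottom --- cannot be used here: that structure's recovery time is $n^{\Theta(k)}$ (this is exactly why the paper discards it in favour of Cormode--Firmani), so your claimed "one-shot $\bigO(k\log k)$ decode" is unsupported; and if you instead used the Cormode--Firmani $k$-sparse recovery with high-probability guarantee you would pay $\bigO(k\log n\log(kn))=\bigO(k\log^2 n)$ bits, exceeding the $k\log k\log n$ term. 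Moreover the peeling loop has up to $k$ adaptive steps, each of which needs randomness independent of the residual it is applied to; your epoch scheme supplies only $\Theta(\log n/\log k)$ fresh units per level, which is not enough once $k$ exceeds roughly $\log n$, and falling back on the flat structure circles back to the problems above.

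The paper's construction resolves all of this with a different (and simpler) unit of replication: define a \emph{$k$-sparse sampler} as the Jowhari et al.\ $L_0$-sampler cascade truncated to $\bigO(\log k)$ levels of $\Theta(1)$-sparse Cormode--Firmani structures, so one sampler costs only $\bigO(\log n\log k)$ bits (and as many random bits) and returns a uniformly random non-zero coordinate of \emph{any} vector of support $\bigO(k)$ with constant probability. Take $\Theta(k+\log n)$ independent copies, each with its seed attached. Recovery peels one coordinate per copy, subtracting it (by linearity, using the attached seeds) from all unused copies, so no coordinate is recovered twice and each residual stays $k$-sparse; since each copy is consumed at most once, its randomness is independent of the residual it sees, and a standard Chernoff bound over $\Theta(k+\log n)$ constant-probability trials yields at least $k$ successes \whp. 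This gives exactly $\bigO((k+\log n)\log n\log k)$ bits of space and randomness and $\bigO((k+\log n)\log k)$ recovery time. The key idea your proposal is missing is that the number of replicated samplers must scale as $\Theta(k+\log n)$ (one fresh sampler per peeling step, plus slack for Chernoff), each sampler being the cheap truncated cascade --- not $\Theta(\log n)$ copies per level plus an exact bottom structure.
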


\begin{proof}
The $L_0$ sampler by Jowhari et al. \cite{Jowhari_2011_recovery} consists of $\Theta(\log n)$ instances of $s$-sparse recovery data structures. The $i$th instance is responsible for sampling a random non zero coordinate from a vector that has support $\Theta(2^i)$. Therefore, for a vector with support $k$, with constant probability, there exists an instance of $s$-sparse recovery data structure ($(\log k)$th instance) that allows to identify random non zero coordinate of this vector.

Therefore, if we have a guarantee that the vector has at most $k$ non zero coordinates, in order to identify a random non zero coordinate, with constant probability, we need only $\bigO(\log k)$ instances of $s$-sparse recovery data structure, for $s\in \Theta(1)$. Same idea was used by Ghaffari and Parter \cite{Ghaffari_2016_MST_log*}, however we rather use implementation of $L_0$ sampler instead of using an alternative implementation of graph sketching techniques. We call an $L_0$ sampler that works for vectors with $\Theta(k)$ supports, a $k$-sparse samplers.

Since the implementation by Cormode and Firmani \cite{cormode_2014_sampling} is basically $L_0$ sampler by Jowhari et al. \cite{Jowhari_2011_recovery}, with alternative $s$-sparse recovery data structure, we can use their version. Since single $\Theta(1)$-sparse recovery data structure that has a constant probability of success requires $\Theta\paren{\Theta(1) \log n \log\paren{\Theta(1) / \Theta(1)}}$ bits, whole structure that returns a random non zero coordinate of a $k$-sparse vector requires only $\bigO(\log n \log k)$ bits. Moreover, since the implementation by Cormode and Firmani \cite{cormode_2014_sampling} uses random variables with bounded dependence, whole data structure requires only $\bigO(\log n \log k)$ shared random bits.

In order to construct $k$-sparse recovery data structure that succeeds \whp, we can use $\Theta(k + \log n)$ instances of $k$-sparse sampler. To each such sampler we attach random bits used for computing this sampler. Since it is only $\bigO(\log n \log k)$ bits, the size of the structure increases only by a constant factor. 

The recovery procedure is loosely based on the $k$-edge connectivity from the streaming model \cite{Ahn_2012_sketches}. More precisely, we use each of $k$-sparse samplers in order to get a non zero coordinate of the vector we want to recover. If it succeeds, we subtract recovered coordinate from the samplers we did not used yet for recovery. We can do that, as each sampler has attached the random bits that were used for its generation. 

More precisely, if we sampled $j$th coordinate, $v_j$, we do the following. For each each unused sampler $S$ we compute a sampler $S(j, v_j)$ that uses the same random bits, but for vector that has $v_j$ as $j$th coordinate and zeros as all other coordinates. Then, since the construction of samplers is linear, we can use $S - S(j, v_j)$ instead of $S$ in later sampling attempts. This ensures that all unused samplers can not sample the $j$th coordinate again, thus no coordinate can be sampled twice.

Moreover, if we start with vector that is $k$ sparse, deletion of some coordinates can not make this vector have more than $k$ non zero coordinates. Since no coordinate can be sampled twice, in order to recover $k' \leq k$ coordinates we only need that $k' \leq k$ samplers succeeded. Since we have $\Theta(k + \log n)$ independent instances of $k$-sparse sampler, by standard Chernoff bound can recover all $k' \leq k$ coordinates, \whp.
\end{proof}

\subsection{Graph sketching and $k$-edge recovery}
\label{ss:edgerecovery}
By the graph neighbourhood of a set of nodes $C \subseteq V$ in graph $G = (V, E)$, we understand set of edges $E_C \subseteq E$, such that the edges in $E_C$ have exactly one endpoint in the set $C$. In the paper by Ahn et al. \cite{Ahn_2012_sketches} the authors provide for a set of nodes $C$ an encoding $S(C)$ of a graph neighbourhood of $C$, such that
\begin{itemize}
\item $S(C)$ is a $\Theta(n^2)$ dimensional vector, 
\item $S$ is linear: $S(C_1) + S(C_2) = S(C_1 \cup C_2)$ for any sets of nodes $C_1$, $C_2$.
\end{itemize}

\noindent For a node $v$, in the graph $G = (V,E)$, its neighbourhood is represented by a $n^2$ dimensional vector $S(\set{v})$ with coordinates $S(\set{v})_{\langle j,k \rangle}$ for $\langle j,k \rangle \in \set{1,2,\dots, n}^2$, where
\begin{itemize}
  \item $S(\set{v})_{\langle j,k \rangle} = 1$ iff $i = j$, and $\set{v_j, v_k} \in E$
  \item $S(\set{v})_{\langle j,k \rangle} = -1$ iff $i = k$, and $\set{v_j, v_k} \in E$
  \item $S(\set{v})_{\langle j,k \rangle} = 0$ otherwise
\end{itemize}
Moreover Ahn et al. provide the following Lemma:
\begin{lemma}\cite{Ahn_2012_sketches}
\label{l:from_sketches}
Let $E_C = E(C, V \setminus C)$ be the set of edges across the cut $(C, V \setminus C)$. Then, $|E_C|$ is equal to the half of the number of non zero coordinates of vector $x = \sum\limits_{v \in C} S(\set{v})$. Furthermore, $|x_{\langle j,k \rangle}| > 0$ iff $\set{v_j,v_k} \in E_C$.
\end{lemma}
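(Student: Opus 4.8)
The plan is to prove both assertions by evaluating each coordinate of $x=\sum_{v\in C}S(\{v\})$ separately. First I would fix an index $\langle j,k\rangle\in\{1,\dots,n\}^2$. If $j=k$ the coordinate vanishes identically, since a nonzero contribution there would require the self-loop $\{v_j,v_j\}\in E$; so assume $j\neq k$. By the defining formula for $S(\{v_i\})$ (the neighbourhood sketch of the node with identifier $i$), the summand $S(\{v_i\})_{\langle j,k\rangle}$ is nonzero only when $\{v_j,v_k\}\in E$ and $i\in\{j,k\}$, in which case it equals $+1$ if $i=j$ and $-1$ if $i=k$. Hence $x_{\langle j,k\rangle}=0$ whenever $\{v_j,v_k\}\notin E$, while for $\{v_j,v_k\}\in E$ the only potentially contributing terms are the ones indexed by $v_j$ and $v_k$, giving $x_{\langle j,k\rangle}=\mathbf{1}[v_j\in C]-\mathbf{1}[v_k\in C]\in\{-1,0,1\}$. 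This is nonzero exactly when precisely one of $v_j,v_k$ lies in $C$, i.e. exactly when $\{v_j,v_k\}$ crosses the cut $(C,V\setminus C)$. That is the ``Furthermore'' claim: $|x_{\langle j,k\rangle}|>0$ iff $\{v_j,v_k\}\in E_C$.

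For the counting claim I would note that the per-coordinate analysis above is symmetric under swapping $j$ and $k$: an edge $\{v_a,v_b\}\in E_C$ (necessarily with $a\neq b$) makes both $x_{\langle a,b\rangle}$ and $x_{\langle b,a\rangle}$ nonzero --- one equal to $+1$ and the other to $-1$, according to which endpoint lies in $C$ --- and it affects no other coordinate of $x$; conversely every nonzero coordinate of $x$ arises from exactly one such edge. Therefore the set of nonzero coordinates of $x$ is precisely $\{\langle a,b\rangle,\langle b,a\rangle:\{v_a,v_b\}\in E_C\}$, a set of cardinality $2|E_C|$, so $|E_C|$ equals half the number of nonzero coordinates of $x$.

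I would close by remarking that $x=\sum_{v\in C}S(\{v\})=S(C)$ is just the stated linearity of $S$ applied inductively, so the lemma can equivalently be read directly off $S(C)$. There is no genuine obstacle here; the only point that needs care is the bookkeeping of the two mirror coordinates $\langle a,b\rangle$ and $\langle b,a\rangle$ contributed by each undirected cut-edge, which is exactly what produces the factor $\tfrac12$. Once the coordinate-wise evaluation is carried out correctly, both parts of the statement are immediate.
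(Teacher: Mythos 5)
Your proof is correct. The paper itself gives no proof of this lemma --- it is quoted from Ahn et al.\ \cite{Ahn_2012_sketches} as a black box --- and your coordinate-wise evaluation, showing $x_{\langle j,k\rangle}=\mathbf{1}[v_j\in C]-\mathbf{1}[v_k\in C]$ for $\set{v_j,v_k}\in E$ and $0$ otherwise, is exactly the standard argument behind the cited result, with the mirror coordinates $\langle a,b\rangle$ and $\langle b,a\rangle$ correctly accounting for the factor $\tfrac12$ in this $n^2$-dimensional (both-orderings) variant of the encoding.
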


Now we can finally provide a protocol solving the \emph{$k$ edge recovery} problem, defined as follows:

\begin{problem}
We are given a graph and a partition of nodes into $c$ disjoint sets $X_1, X_2, \dots, X_c$ and the set of nodes $v_1, v_2, \dots, v_k$. For each $i$, $v_i \in X_i$ and all members of set $X_i$ know the ID of $v_i$. We have guarantee that each set has at most $k$ outgoing edges, that is $\forall i, |E_{X_i}| \leq k$, where = $E_{X_i} = \set{\set{u,v}\ |\ u \in X_i \wedge v \in X_j \text{ for }j \not = i }$. As a result we require that for each $i$, $v_i$ knows $E_{X_i}$.
\end{problem}

\begin{theorem}
\label{t:k_recovery}
It is possible to solve the $k$ edge recovery problem in 
\begin{itemize}
\item $\bigO(k \log k / \log n)$ rounds, \whp, for $k \geq \log^2 n / \log \log n$, 
\item $\bigO(\log n / (\epsilon \log \log n))$ rounds, \whp, for $k = \log^{2-\epsilon} n / ((2-\epsilon)\epsilon \log \log^2 n)$. 
\end{itemize}
\end{theorem}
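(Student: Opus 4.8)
The plan is to reduce the $k$-edge recovery problem to one invocation of the $k$-sparse recovery data structure of Lemma~\ref{l:recovery_eff} per part, exploiting the linearity of both the neighbourhood sketch $S(\cdot)$ of Ahn et al.\ and of the recovery data structure. Fix a part $X_i$ and let $x^{(i)} = \sum_{v\in X_i} S(\set{v})$. By Lemma~\ref{l:from_sketches} the vector $x^{(i)}$ has exactly $2|E_{X_i}| \le 2k$ non-zero coordinates, and those coordinates are precisely the pairs encoding the edges of $E_{X_i}$; hence it is enough for $v_i$ to learn the support of $x^{(i)}$. Let $L$ be the $2k$-sparse recovery map of Lemma~\ref{l:recovery_eff}; since $L$ and $S(\cdot)$ are linear, $L(x^{(i)}) = \sum_{v\in X_i} L(S(\set{v}))$, so each vertex can compute its own summand locally from its incident edges once it knows $L$, and the only communication needed is to agree on $L$ and to add the summands up inside each part.

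The algorithm is then: (i) $v_i$ draws a $2k$-sparse recovery map $L_i$ from $\bigO(\log^2 n\log k + k\log k\log n)$ private random bits and multicasts it to all of $X_i$ using the pipelined multicast (Theorem~\ref{t:pipelined_multicast} and the corollary following it); (ii) each $v\in X_i$ locally forms $L_i(S(\set{v}))$; (iii) for all parts simultaneously the vertices of $X_i$ aggregate their sketches into $v_i$ --- coordinate-wise vector addition is a distributive aggregative function that splits into chunks of any size down to $\Theta(\log n)$ bits --- using the pipelined aggregate (Theorem~\ref{t:pipelined_aggregate} and the corollary following it); (iv) $v_i$ runs the recovery procedure of Lemma~\ref{l:recovery_eff} on $L_i(x^{(i)})$ and outputs the recovered edges. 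For each fixed part the recovery succeeds \whp by Lemma~\ref{l:recovery_eff} (here $|E_{X_i}|\le k$, so $x^{(i)}$ is $2k$-sparse, and $L_i$ is drawn independently of the vertex vectors of $X_i$), and a union bound over the $c\le n$ parts preserves the guarantee. Step~(iv) is local computation of time $\bigO(\log n\log k + k\log k) = \tilde\bigO(n)$, within the allowed budget.

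It remains to bound the round complexity, which is dominated by steps (i) and (iii); each moves a total of $b := \bigO(\log^2 n\log k + k\log k\log n)$ bits. For $k \ge \log^2 n / \log\log n$ we have $\log n \le k$, so $\log^2 n\log k \le k\log k\log n$ and $b = \Theta(k\log k\log n) \ge \log^3 n$; the multicast/aggregate corollaries then give $\bigO(b/\log^2 n) = \bigO(k\log k/\log n)$ rounds. For $k = \log^{2-\epsilon}n/((2-\epsilon)\epsilon\log\log^2 n)$ we have $\log k = \Theta((2-\epsilon)\log\log n)$, hence $k\log k\log n = \Theta(\log^{3-\epsilon}n/\epsilon)$ and $b = \Theta(\log^2 n\log\log n + \log^{3-\epsilon}n/\epsilon)$; this has the form $\Theta(\log^{3-\epsilon'}n/(\epsilon'\log\log n))$ required by the $b \le \log^3 n$ branch of the corollaries with $\epsilon' = \Theta(\epsilon)$, which yields $\bigO(\log n/(\epsilon\log\log n))$ rounds.

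The step I expect to be the main obstacle is exactly this last matching of the bit count to the parameter ranges of the multicast and aggregate corollaries: the unusual normalisation of $k$ in the second regime is chosen so that $\log^2 n\log k + k\log k\log n$ collapses to the clean form $\log^{3-\epsilon'}n/(\epsilon'\log\log n)$ --- the factor $(2-\epsilon)$ cancelling against $\log k$ and $\log\log^2 n = 2\log\log n$ against the surplus $\log\log n$ --- so one has to be careful with the lower-order ($\log\log\log n$) corrections and with how $\epsilon$ near $1$ or near $2$ shifts which of the two terms of $b$ dominates. A minor additional point is checking that the high-probability guarantee of Lemma~\ref{l:recovery_eff}, which rests on $\Theta(k+\log n)$ independent samplers, is strong enough to survive the union bound over all $c\le n$ parts.
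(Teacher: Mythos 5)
Your proposal is correct and follows essentially the same route as the paper: distribute the random seed defining a linear $k$-sparse recovery structure (Lemma~\ref{l:recovery_eff}), have each node sketch its Ahn-et-al.\ neighbourhood encoding, aggregate the linear (hence distributive aggregative, splittable) sketches within each part via the pipelined aggregate, and decode at $v_i$ using the $\bigO(k)$-sparsity guaranteed by Lemma~\ref{l:from_sketches}, with the same accounting against the multicast/aggregate corollaries. The only cosmetic difference is that the paper broadcasts one global seed from node $1$ (Corollary~\ref{c:shared_randomness}) rather than multicasting a private seed per part, which changes nothing in the argument or the round bounds.
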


\begin{proof}
The protocol for the $k$ edge recovery is following. At the beginning, node with id $1$ generates $\bigO(k \log n + \log^2 n)$ random bits. Then, by Corollary \ref{c:shared_randomness}, those bits can be announced to all nodes in
\begin{itemize}
\item $\bigO(k \log k / \log n)$ rounds, for $k \geq \log^2 n / \log \log n$, 
\item $\bigO(\log n / (\epsilon \log \log n))$ rounds, for $k = \log^{2-\epsilon} n / ((2-\epsilon)\epsilon \log \log^2 n)$. 
\end{itemize}
At the beginning each node encodes its neighbourhood using encoding by Ahn et al. \cite{Ahn_2012_sketches}. Then, with shared random bits, each node can generate $\Theta(k + \log n)$ instances of linear $k$-sparse recovery data structure from Lemma \ref{l:recovery_eff}. Each of those instances can be treated as a set of $\Theta(\log k)$ independent $\Theta(1)$-sparse recovery data structures from the paper \cite{cormode_2014_sampling}. Each of them is a linear function, therefore it is also distributive aggregative. Total size of $k$-sparse recovery data structure is $\bigO((k +\log n)\log k \log n)$ bits, therefore in each set $X_i$ we can aggregate $k$-sparse recovery data structures from its members in node $v_i$, by Theorem \ref{t:pipelined_aggregate}, in 
\begin{itemize}
\item $\bigO(k \log k / \log n)$ rounds, for $k \geq \log^2 n / \log \log n$, 
\item $\bigO(\log n / (\epsilon \log \log n))$ rounds, for $k = \log^{2-\epsilon} n / ((2-\epsilon)\epsilon \log \log^2 n)$. 
\end{itemize}

By Theorem \ref{l:recovery_eff}, the node $v_i$ can decode the vector as long as it has $\bigO(k)$ non zero coordinates. In the problem definition we have an assumption that for each $i$, $|E_{X_i}|\leq k$. Therefore, by Lemma \ref{l:from_sketches}, $S(X_i)$ has $\bigO(k)$ non zero coordinates. Thus, the vector $S(X_i)$ can be decoded by $v_i$, using aggregated information. Therefore all nodes $v_i$ know the set $E_{X_i}$, and the number of rounds required for the protocol to finish is as stated in Theorem \ref{t:k_recovery}.
\end{proof}

\section{Spanning Forests Algorithms}
\label{s:mst}
With tools provided in Section \ref{s:tools} and Section \ref{s:recovery}, we can proceed to the algorithms for the Spanning Forest and  Minimum Spanning Forest problems. 

In the former part of this section we present the \SF algorithm that is an implementation of slightly modified version of the Borůvka's algorithm \cite{Boruvka}. In this algorithm, the computation is performed in phases. The invariant of each phase is that the algorithm maintains some spanning forest (starting with each node as a separate tree), such that it is a subgraph of the final minimum spanning forest. In each phase, we select for each tree (connected component) in this forest a \emph{minimum weight outgoing edge} (MWOE). In each phase, the algorithm uses MWOEs to merge trees inside the maintained spanning forest. After $\ceil{\log n}$ phases, the maintained forest is the minimum spanning forest of the input graph. One of modifications for the \SF problem is that, since we do not care about weights, we can simply select an arbitrary outgoing edge for each component in each phase.

In the latter part of this section we provide the \MSF algorithm that on the top level is based on the random sampling approach by Karger, Klein and Tarjan \cite{Karger_1995_MST}. Let consider an $n$ node graph $G$, and $F$ be the Minimum Spanning Forest of a random subgraph of $G$. By the \emph{$F$-light} edges in $G$ we understand those edges $e \in G$ that are not the heaviest edge on a cycle in $F \cup \set{e}$. They propose the following Lemma:
\begin{lemma}\cite{Karger_1995_MST}
\label{l:KKT}
Let $H$ be a subgraph obtained from $G$ by including each edge independently with probability $p$, and let $F$ be the minimum spanning forest of $H$. The expected number of $F$-light edges in $G$ is at most $n/p$ where $n$ is the number of vertices of $G$.
\end{lemma}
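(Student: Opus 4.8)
The plan is to follow the classical coupling argument of Karger, Klein and Tarjan, which reduces everything to a single observation about Kruskal's algorithm run on the edges of $G$ in increasing order of weight. The key structural fact is a combinatorial characterization of $F$-light edges: an edge $e = \{u,v\} \in G$ is $F$-light if and only if $u$ and $v$ lie in different connected components of $F_{<w(e)}$, the subforest of $F$ consisting of those edges with weight strictly smaller than $w(e)$. First I would prove this directly from the definition. If $F \cup \set{e}$ is acyclic then $e$ is (vacuously) $F$-light, and $u,v$ lie in different components of $F$, hence of $F_{<w(e)}$. Otherwise $F \cup \set{e}$ contains a unique cycle $C$; since all weights are distinct, $e$ is $F$-light iff some edge of $C \setminus \set{e}$ is heavier than $e$, iff the $F$-path between $u$ and $v$ uses an edge of weight larger than $w(e)$, iff $u$ and $v$ are separated in $F_{<w(e)}$.

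Next I would record the independence observation that makes the argument work. Running Kruskal's algorithm on the edges of $H$ in increasing weight order produces exactly $F$, and the forest it has built just before considering edges of weight $w(e)$ is precisely $F_{<w(e)}$; by the greedy/matroid structure, $F_{<w(e)}$ is already determined by the set of edges of $H$ of weight less than $w(e)$ alone. Consequently, whether $e$ is $F$-light is a function only of the (independent) inclusion indicators of the edges of $G$ strictly lighter than $e$, and in particular it is independent of the indicator $\mathbb{1}[e \in H]$ and of all heavier edges' coins.

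With these two facts in hand the counting is short. Let $N$ denote the number of $F$-light edges of $G$ and let $S$ denote the number of $F$-light edges of $G$ that also lie in $H$. By the characterization above, for $e \in H$ the condition ``$e$ is $F$-light'' is exactly the condition that Kruskal adds $e$ while building $F$; hence these $S$ edges form a forest on the $n$ vertices, so $S \le n-1$ holds deterministically. For each edge $e$, using independence of $\mathbb{1}[e \in H]$ from the event ``$e$ is $F$-light'', we get $\prob{e \text{ is } F\text{-light and } e \in H} = p \cdot \prob{e \text{ is } F\text{-light}}$; summing over all edges of $G$ and using linearity of expectation gives $\expv{S} = p \cdot \expv{N}$. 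Therefore $\expv{N} = \expv{S}/p \le (n-1)/p \le n/p$, which is the claimed bound.

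I expect the only real obstacle to be the careful justification of the $F$-light characterization together with the independence claim, namely that ``$e$ is $F$-light'' can be decided without ever looking at $e$'s own inclusion coin or at any heavier edge of $G$. Once this decoupling is in place, the rest is a one-line application of linearity of expectation combined with the deterministic inequality $S \le n-1$.
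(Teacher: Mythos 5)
The paper does not prove this lemma at all -- it is quoted verbatim from Karger, Klein and Tarjan \cite{Karger_1995_MST} -- so there is nothing internal to compare against; what you have written is a correct, self-contained proof of the cited result. Your three ingredients all check out: the characterization that $e=\set{u,v}$ is $F$-light iff $u,v$ are separated in $F_{<w(e)}$ (using the paper's standing assumption of distinct weights), the decoupling that $F_{<w(e)}$, hence the event ``$e$ is $F$-light'', is a function of the coins of strictly lighter edges only, and the deterministic bound $S\le n-1$ because an $F$-light edge that lands in $H$ is exactly an edge Kruskal adds to $F$; combined with $\expv{S}=p\,\expv{N}$ this gives $\expv{N}\le (n-1)/p\le n/p$. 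The only remark worth making is that your linearity-of-expectation finish is a slightly streamlined variant of the original argument, which instead stochastically dominates the number of $F$-light edges by a negative binomial variable (coin flips until $n$ successes); both yield the same bound, and yours is arguably cleaner for the ``expected value'' form stated here.
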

 Moreover, the \MSF of $G$ consists only of edges that are either in the forest $F$ or are $F$-light. Therefore, using Lemma \ref{l:KKT}, we can reduce computing \MSF of some dense graph, to 
\begin{itemize}
  \item computing $F$, the \MSF of the subgraph $H$,
  \item identifying $F$-light edges in $G$,
  \item computing the \MSF of $F$ extended by the set $F$-light edges.
\end{itemize}

\subsection{Spanning Forest Algorithm}
\label{s:sf}
In this subsection we show the following.
\begin{theorem}
\label{t:sf}
For a given unweighted, undirected, $n$ node graph $G$, it is possible to find a Spanning Forest of $G$ in the $\bigO(\log^2 n)$ rounds of the \ncclique model, \whp.
\end{theorem}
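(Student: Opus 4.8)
The plan is to run Borůvka's algorithm \cite{Boruvka} for $\ceil{\log n}$ phases and to implement a single phase in $\bigO(\log n)$ rounds of the \ncclique, so the total is $\bigO(\log^2 n)$. Throughout we maintain a partition of the vertex set into components, each with a designated leader whose id is known to every member; initially each vertex is a singleton and its own leader. The invariant is that the edges selected so far form an acyclic subgraph whose connected components are exactly the components of the partition. Each node permanently stores its own neighbourhood sketch $S(\set v)$ in the sense of Ahn et al. \cite{Ahn_2012_sketches}.

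One phase proceeds as follows. First node $1$ draws $\Theta(\log^3 n)$ fresh shared random bits and announces them to all nodes via Corollary \ref{c:shared_randomness} in $\bigO(\log n)$ rounds; from this seed each node builds $\Theta(\log n)$ independent copies of the $\Theta(\log^2 n)$-bit linear $L_0$-sampler of Lemma \ref{l:sampling_eff} applied to $S(\set v)$. The bundle of samplers at a node is a linear (hence distributive-aggregative), $\Theta(\log n)$-splittable function occupying $\Theta(\log^3 n)$ bits, so by the $\alpha$-splittable aggregation corollary built on Theorem \ref{t:pipelined_aggregate} every component can sum the bundles of its members into its leader in $\bigO(\log n)$ rounds, simultaneously for all components. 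By linearity and Lemma \ref{l:from_sketches} the vector the leader obtains is exactly the cut sketch of its component; since each level of sketches is used in a single phase, the partition at this phase is independent of this phase's randomness, so with probability $1-1/\mathrm{poly}(n)$ at least one of the $\Theta(\log n)$ copies decodes an outgoing edge for \emph{every} component that still has one.

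The $\bigO(n)$ selected edges (at most one per leader) are redistributed evenly, e.g. by sorting via Corollary \ref{c:sorting}, and fed as a contracted multigraph on the current leaders to the \EP spanning-forest algorithm of Halperin and Zwick \cite{halperin_1996_SF_PRAM}; by Corollary \ref{c:ep_sf} this takes $\bigO(\log n)$ rounds and returns (i) a spanning forest of that multigraph — the edges actually kept, which is acyclic, so picking arbitrary outgoing edges never closes a cycle — and (ii) for each old leader the id of its new leader. Each old leader then multicasts the new id to the members of its old component by Theorem \ref{t:multicast} in $\bigO(\log n/\log\log n)$ rounds, restoring the invariant. Every non-isolated component merges with at least one other, so the number of components at least halves per phase and $\ceil{\log n}$ phases suffice. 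After the last phase the kept edges form a spanning forest of $G$; to meet the output convention we root each tree and orient every kept edge toward the endpoint farther from the root (a standard $\bigO(\log n)$-round \EP pointer-jumping computation through Corollary \ref{c:ep_sf}), making each node the chosen endpoint of at most one edge, and route each edge to that endpoint by Theorem \ref{t:routing} in $\bigO(1)$ rounds.

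\textbf{Where the difficulty lies.} The delicate point is squeezing an entire Borůvka phase into $\bigO(\log n)$ rounds under the $\bigO(\log n)$-message bandwidth: one component already needs a $\Theta(\log^2 n)$-bit sampler, and since the $\ceil{\log n}$-phase bound tolerates no constant fraction of failed components we must boost to $\Theta(\log^3 n)$ bits per component and still aggregate all components in parallel — which is exactly what the $\alpha$-splittable aggregate corollary buys. The other subtlety is the scarce shared randomness: linear sketching in the style of Karger, Klein and Tarjan \cite{Karger_1995_MST} needs independent randomness per Borůvka level (reusing sketches across phases would make the partition depend on the sampler randomness), so we re-draw and re-broadcast $\Theta(\log^3 n)$ bits each phase, which by Corollary \ref{c:shared_randomness} costs only $\bigO(\log n)$ rounds; a final care point is that a vertex can be incident to $\Theta(n)$ kept edges in one phase (star-like instances), which is why the final orientation toward children, rather than naive routing, is needed.
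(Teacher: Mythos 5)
Your proposal follows the same skeleton as the paper's proof --- Borůvka phases, Ahn et al.\ neighbourhood sketches aggregated at component leaders, the \EP connected-components/spanning-forest simulation to merge, and a multicast of new leader ids --- but handles per-phase failure differently. The paper aggregates a single $\bigO(\log^2 n)$-bit $L_0$ sampler per component (constant success probability) in $\bigO(\log n/\log\log n)$ rounds, accepts that only some components merge in a given phase, and invokes the analysis of \cite{Ghaffari_2016_planar, Ghaffari_2017_mixing} to conclude that $\bigO(\log n)$ phases still suffice; you instead boost to high probability inside each phase with $\Theta(\log n)$ independent sampler copies ($\Theta(\log^3 n)$ bits), paying $\bigO(\log n)$ rounds of splittable aggregation plus a per-phase $\Theta(\log^3 n)$-bit randomness broadcast, in exchange for the elementary ``components at least halve'' bound and exactly $\ceil{\log n}$ phases. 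Both give $\bigO(\log^2 n)$ overall: yours spends more communication per phase to avoid the probabilistic phase-count lemma, the paper's keeps the sketches light at the cost of citing that lemma. Your end-of-algorithm orientation trick for the output convention is likewise a valid alternative to the paper's per-phase multicast of the sampled edge inside each old component, which already ensures that the endpoint lying in the sampling component learns the edge (at most one new edge per node per phase).

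One step in your phase description is glossed over and is exactly where the bandwidth constraint bites. To feed ``a contracted multigraph on the current leaders'' to the \EP algorithm, the leader holding a sampled edge $\set{u,v}$ (with $u$ in its component) must first learn the leader id of the component containing $v$. Naive querying fails in the star-like situations you yourself flag for the output step: $\Theta(n)$ leaders may all hold sampled edges whose far endpoint is the same vertex $v$, which can receive only $\bigO(\log n)$ messages per round. The paper does real work here: in the \SF proof it resolves this lookup via the multicast primitive (leaders are grouped by the far endpoint, which announces its leader id; the construction behind Theorem \ref{t:multicast} starts by sorting by the ``special'' node id, which is precisely what absorbs the stars), and in the Borůvka subsection for \MSF it spells out the full sorting-plus-routing argument. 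The gap is fillable with tools you already cite, in $\bigO(\log n/\log\log n)$ rounds per phase, but as written your proof skips it.
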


\begin{proof}

The algorithm is an implementation of slightly modified version of the Borůvka's algorithm. As in the original version we maintain a spanning forest of the input graph, and in phases we merge trees in this forest. The differences are that
\begin{itemize}
\item we consider unweighted graphs
\item we identify an outgoing edge for a component with constant probability
\end{itemize}
The first difference between our version and the one provided by Borůvka \cite{Boruvka} allows us to chose arbitrary outgoing edge rather than the one with the minimum weight. As for the second alteration, it does not increase the number of required phases (similar analysis as in \cite{Ghaffari_2017_mixing, Ghaffari_2016_planar}), and together with the first modification allows us to use graph sketches that work with constant probability.

\paragraph{Phase Invariants}
At the beginning of each phase we have following invariants:
\begin{itemize}
\item for each edge that were used for merging in all previous phases, at least one endpoint is aware of that
\item each node knows an ID of a leader (some arbitrary assigned member) of its connected component in current spanning forest
\end{itemize}

At the beginning of the algorithm, there are no edges that was used for merging in the previous phases, each node is in separate connected component and each node is a leader of its component. Thus, all invariants hold.

\paragraph{Single Phase}
Since in a current spanning forest each node knows ID of a leader of its component, we can use graph sketching approach (Lemma \ref{l:sampling_eff} and graph neighbourhood encoding by Ahn et al. \cite{Ahn_2012_sketches}) and pipelined aggregate protocol, in order to identify with constant probability an outgoing edge with constant probability in $\bigO(\log n / \log \log n)$ rounds. At this point, we have a set of edges between nodes. In order to identify the new division into connected components, we do the following:

\begin{itemize}
\item for each edge identify component ID's of two endpoints
\item run \EP Connected Components / \SF algorithm
\item each leader in the old division announces its component ID in the new division to all nodes in its component in the old division
\end{itemize}

The first and the third step can be done in $\bigO(\log n / \log \log n)$ rounds, by Theorem \ref{t:multicast}. As for the second step, we can do the following: To each edge, we attach the information with ID of a leader that sampled this edge. Since we have only $\bigO(n)$ edges, and each is held by some different node, we can identify Connected Components and Spanning Forest in the component graph via \EP algorithm (Corollary \ref{c:ep_sf}) in $\bigO(\log n)$ rounds, \whp. Since each edge has attached ID of a leader that sampled it, we can notify this leader that the edge is in the Spanning Forest in $\bigO(1)$ rounds, by Theorem \ref{t:routing}. Then, all leaders simultaneously in all components of old division can announce what was the sampled edge and whether it is in the resulting Spanning Forest, in $\bigO(\log n / \log \log n)$ rounds, by Theorem \ref{t:multicast}.

\paragraph{Summary of the Spanning Forest Algorithm}
Since we can implement a single phase of the algorithm in $\bigO(\log n)$ rounds, \whp, and by the results of \cite{Ghaffari_2016_planar, Ghaffari_2017_mixing} algorithm requires $\bigO(\log n)$ rounds, total number of rounds required by our algorithm is $\bigO(\log^2 n)$, \whp.
\end{proof}

\subsection{Minimum Spanning Forest Algorithm}
In the remaining part of this section, we provide a Minimum Spanning Forest algorithm. More precisely, we prove the following:
\begin{theorem}
\label{t:mst}
For a given weighted, undirected, $n$ node graph $G$, it is possible to find the Minimum Spanning Forest of $G$ in the $\bigO\paren{\log^2 n \ceil{\frac{\log \Delta}{\log \log n}}}$ rounds of the \ncclique model, \whp.
\end{theorem}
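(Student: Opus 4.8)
The plan is to implement the Karger--Klein--Tarjan recursive sampling scheme in the \ncclique model, using the tools built in Sections \ref{s:tools} and \ref{s:recovery}. At the top level, given $G$ with maximum degree $\Delta$, I would run the sampling recursion: sample a subgraph $H$ of $G$ by keeping each edge with probability $p = \Theta(1/\log n)$ using $\Theta(\log n)$-wise independent random variables (this is exactly the ``special case $p^{-1}\in\bigO(\log n)$'' advertised in the high-level description), recursively compute $F = \MSF(H)$, identify the $F$-light edges of $G$ via Lemma \ref{l:KKT}, and finally compute the \MSF of $F$ together with the $F$-light edges. The recursion depth is $\bigO\paren{\ceil{\frac{\log\Delta}{\log\log n}}}$, because each level shrinks the maximum degree (in expectation, and \whp with $k$-wise independence for the right $k$) by a factor of roughly $p^{-1}=\Theta(\log n)$, so after that many levels the graph is sparse enough that $|E|\in\bigO(n\log n)$ and we can just hand it to the \EP Connected Components / Spanning Forest machinery of Corollary \ref{c:ep_sf}. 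So the target bound is (per level) $\bigO(\log^2 n)$ rounds times the depth.

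\noindent \textbf{Per-level cost.} Within one recursion level I would spend the rounds as follows. (i) Generating and broadcasting the $\bigO(\log^2 n)$ random bits needed for a $\Theta(\log n)$-wise independent hash family costs $\bigO(\log n)$ rounds via the pipelined multicast of Corollary \ref{c:shared_randomness}. (ii) Computing $F=\MSF(H)$ recursively --- but note $H$ has $\bigO(n\log n)$ edges \whp\ (because $p=\Theta(1/\log n)$ and the degrees are $\bigO(\Delta)$; more carefully, one uses the tighter $k$-wise-independent tail bound announced as a key contribution to argue $H$ is sparse \whp, not just in expectation). Actually, the cleanest route is: $H$ is already sparse enough that Corollary \ref{c:ep_sf} computes $\MSF(H)$ directly in $\bigO(\log n)$ rounds, so the ``recursion'' only needs one extra sampling layer on $G$ itself, and the depth term $\ceil{\frac{\log\Delta}{\log\log n}}$ comes from iterating the sparsification on $G$ until its own edge set is $\bigO(n\log n)$; either framing gives the same bound. (iii) The \MSF of $F\cup\{F\text{-light}\}$: since $|F|\le n-1$ and the expected (and \whp, by the bounded-independence analysis) number of $F$-light edges is $\bigO(n/p)=\bigO(n\log n)$, this is again an $\bigO(n\log n)$-edge instance, solvable in $\bigO(\log n)$ rounds by Corollary \ref{c:ep_sf} --- but it must first be \emph{assembled} in the evenly-distributed form that corollary demands, which costs $\bigO(\log n)$ rounds of routing/sorting (Theorem \ref{t:routing}, Corollary \ref{c:sorting}).

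\noindent \textbf{The main obstacle: computing $F$-light edges with bounded communication.} This is the step I expect to dominate and where all the earlier machinery is really needed. Given $F$ (the \MSF of the sampled subgraph), an edge $e=\{u,v\}$ of $G$ is $F$-light iff $w(e)$ is not the max weight on the $u$--$v$ path in $F$. The naive test needs every node to know all of $F$, which is too much communication. Following Pemmaraju--Sardeshmukh and exploiting $p^{-1}=\bigO(\log n)$, I would instead run a Bor\r{u}vka-style contraction on $F$ (the modified Bor\r{u}vka of Theorem \ref{t:sf}, which hands back exactly the component/leader structure and the merge edges we need), and at each Bor\r{u}vka level use the $k$-edge recovery primitive of Theorem \ref{t:k_recovery} with $k=\Theta(\log n)$ to let each contracted component learn its $\bigO(\log n)$ ``surviving'' candidate light edges (the promise $|E_{X_i}|\le k$ holds \whp\ precisely because each component of $H$'s \MSF has only $\bigO(1/p)=\bigO(\log n)$ outgoing $F$-light edges, by the tighter KKT analysis). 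Each such round of edge recovery plus one Bor\r{u}vka merge costs $\bigO(\log n /\log\log n)$ rounds via Theorems \ref{t:multicast}, \ref{t:pipelined_aggregate}, \ref{t:routing}, and there are $\bigO(\log n)$ Bor\r{u}vka levels, giving $\bigO(\log^2 n)$ rounds for the $F$-light identification --- which is the dominant per-level cost. Summing over the $\bigO\paren{\ceil{\frac{\log\Delta}{\log\log n}}}$ recursion levels yields the claimed $\bigO\paren{\log^2 n\ceil{\frac{\log\Delta}{\log\log n}}}$ round complexity, \whp. The delicate points I would have to nail down carefully are: the high-probability (not just expectation) bounds on $|E(H)|$, on $|\{F\text{-light edges}\}|$, and on the per-component light-degree $\le k$, all under $\Theta(\log n)$-wise independence rather than full independence --- this is exactly the ``tighter analysis of the special case of KKT sampling'' promised in the introduction, and it is the technical heart on which correctness rests.
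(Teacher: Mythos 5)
Your top-level skeleton is the paper's: the KKT recursion with $p=1/\log n$ under $\Theta(\log n)$-wise independence, recursion depth $\bigO\paren{\ceil{\frac{\log \Delta}{\log \log n}}}$, and per-level identification of the $F$-light edges through a Borůvka decomposition combined with $\Theta(\log n)$-edge recovery. However, two of your concrete steps would fail as written. First, Corollary \ref{c:ep_sf} only yields Connected Components / an arbitrary Spanning Forest (Halperin--Zwick); it says nothing about a \emph{minimum} spanning forest, so neither the sparse base case nor the instance $F\cup\set{F\text{-light edges}}$ can simply be ``handed to it'' in $\bigO(\log n)$ rounds. The paper instead proves Theorem \ref{t:boruvka}: a weighted Borůvka implementation for evenly distributed $\bigO(n\log n)$-edge instances running in $\bigO(\log^2 n)$ rounds, and this is the subroutine used at every recursion level (your final bound survives because the per-level cost is dominated by $\bigO(\log^2 n)$ anyway, but the $\bigO(\log n)$-round sparse \MSF routine you invoke does not exist). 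Second, and more seriously, you propose to obtain the contraction structure for the $F$-light step from the unweighted algorithm of Theorem \ref{t:sf}. That algorithm merges along \emph{arbitrary} sketched outgoing edges, so it does not produce the MWOEs $e^i_j$; but the sets $L^i_j$ are precisely the edges of $G$ leaving $C^i_j$ that are lighter than $w(e^i_j)$, and it is only this weight threshold that makes them $\bigO(\log n)$-sparse \whp (Lemmas \ref{l:expected} and \ref{l:small}). Without it a component can have $\Theta(\Delta)$ outgoing edges, each node cannot restrict its sketch to the relevant (light) incident edges, and the promise $|E_{X_i}|\le k$ of Theorem \ref{t:k_recovery} is violated. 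The ``Borůvka decomposition'' (leaders and MWOEs per phase) is exactly the extra output that Theorem \ref{t:boruvka} delivers alongside $F$, and Theorem \ref{t:f_light_edges} takes it as a hypothesis.

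A smaller gap: your ``cleanest route'' alternative --- that a single sample $H$ already has $\bigO(n\log n)$ edges, or that one can just iterate sparsification of $G$ until it is sparse --- is not correct. For a degree-$\Delta$ graph and $p=1/\log n$, the sample has degree $\bigO(\Delta/\log n+\log n)$ \whp, so sparsity is reached only after $\ceil{\frac{\log\Delta}{\log\log n}}$ nested sampling levels, and repeated subsampling without the per-level $F$-light filtering discards \MSF edges; the depth term genuinely comes from the chain of dependent KKT instances, as in your first framing. The remaining ingredients you list (shared randomness via Corollary \ref{c:shared_randomness}, the $k$-wise-independent tail bounds, and the load balancing into evenly distributed form before re-entering the sparse routine) do match the paper.
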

\begin{proof}
On the top level our \MSF algorithm uses an approach based on recursive application of KKT Sampling Lemma \cite{Karger_1995_MST}. A straightforward application of Lemma \ref{l:KKT} in \cclique model \cite{Hegeman_2015_MST_logloglog, Ghaffari_2016_MST_log*, Jurdzinski_2018_MST_O(1)} required that the whole forest $F$ is known by all nodes in order to identify $F$-light edges. In \ncclique even a single node can not know the forest $F$, simply because of the bandwidth restriction.

In the paper by Pemmaraju and Sardeshmukh \cite{Pemmaraju_2016_MST_o(m)}, the authors provide a slightly different approach to the KKT Sampling that does not require any centralized knowledge of the whole $F$. More precisely they use the centralized knowledge in order to coordinate communication, but if we could coordinate communication in some other way, there is no need of gathering information about all the edges of $F$ in a single node.

Furthermore, they provide an analysis, which gives so called component--wise bound on the number of the $F$-light edges. Roughly speaking, if we run Borůvka's algorithm on $F$, and look at the set of connected components in some fixed phase, the number of $F$-light edges in $G$ outgoing from each component that are lighter than MWOE of this component is bounded. Then, they show that the set of such edges contains all $F$-light edges, and provide a procedure identifying them, via graph sketching techniques \cite{Ahn_2012_sketches, Hegeman_2015_MST_logloglog}. Thus, they provide an algorithm that requires in total to send the number of message that is only $\polylog(n)$ factor larger than the size of the set of $F$-light edges.

Unfortunately, this result still requires too large communication to use it in the \ncclique model without any adjustments. In this section we provide a version of this approach suited for the \ncclique model.

\subsection{Top level description}
As mentioned before, the algorithm is a recursive application of the KKT Sampling Lemma \cite{Karger_1995_MST}. More precisely, given a graph of degree bounded by $\Delta$, we select a random subgraph, by including each edge with probability $p = 1 / \log n$. In order to execute sampling step, we use $\bigO(\log n)$-wise independent hashing functions. As long as both endpoints of an edge know the same hashing functions, this allows for both endpoints to make consistent decisions. Moreover, in order to generate $\Theta(\log n)$ independent hashing functions that are known to all nodes, we need a shared randomness of size $\Theta(\log^2 n)$. Since we sampled each edge with probability $p$, we get that the first instance has degree bounded by $\bigO(p \Delta + \log n)$, with high probability. 

The general idea is to provide an algorithm for graphs with $\bigO(n \log n)$ evenly distributed edges ($\bigO(\log n)$ edges per node), and use KKT sampling with probability $p$ as long as degree of a graph is not $\bigO(\log n)$. Such graph is represented as a list of evenly distributed $\bigO(n \log n)$ edges.  

Since, given a graph of degree $\Delta$, single application of KKT Sampling Lemma with probability of sampling $p=1 / \log n$ gives a graph with degree smaller by a factor $p$, we need up to $\bigO\paren{\ceil{\log_{p^{-1}}\Delta}} = \bigO\paren{\ceil{\frac{\log \Delta}{\log \log n}}}$ recursive calls in order to bring down the degree of a graph to $\bigO(\log n)$. Therefore, total number of shared random bits required for sampling on each of $\bigO\paren{\ceil{\frac{\log \Delta}{\log \log n}}}$ levels of recursion is $\bigO\paren{\log^2 n \ceil{\frac{\log \Delta}{ \log \log n}}}$. By Corollary \ref{c:shared_randomness} we can provide those random bits in $\bigO(\log n)$ rounds.

To complete the algorithm, we provide an algorithm that computes \MSF $F$ of a graph represented by $\bigO(n \log n)$ evenly distributed edges in a way that provides information useful for generation of the $F$-light edges. Also, as a result for each edge in \MSF at least one endpoint knows that this edge is in the computed \MSF. Then we provide the algorithm that recovers the $F$-light edges given the result of aforementioned algorithm. Finally, we provide how to reduce the forest $F$ extended by the $F$-light edges into an instance of \MSF represented by $\bigO(n \log n)$ evenly distributed edges.

\subsection{Borůvka's algorithm for graphs with $\bigO(n \log n)$ evenly distributed edges}
\label{s:boruvka}
In this subsection we provide an algorithm computing a Minimum Spanning Forest, for graphs given as evenly distributed list of $\bigO(n \log n)$ edges. Proposed procedure, besides finding a Minimum Spanning Forest $F$, provides an additional information for the protocol identifying the $F$-light edges. 

Let us define a \emph{Borůvka's decomposition} in the following way: let $C^i = \set{C^i_1, C^i_2, \dots, C^i_k }$ denotes the set of connected components at the beginning of the $i$th phase of Borůvka's algorithm executed on the graph $G$, and $e^i_j$ denotes the MWOE of $C^i_j$. Then the \emph{Borůvka's decomposition} of a graph $G$ is set of pairs $\langle C^i_j, e^i_j \rangle$, for all $i$ and $j$. Furthermore, we say that the \emph{Borůvka's decomposition is known}, if for all $i$ and $j$, all members of $C^i_j$ know the leader (some distinguished node) of the component $C^i_j$ and the edge $e^i_j$.

This subsection is dedicated to prove the following Theorem:
\begin{theorem}
  \label{t:boruvka}
  Given a weighted graph $H$ given as a evenly distributed list of $\bigO(n \log n)$ edges, it is possible to implement Borůvka's algorithm in $\bigO(\log^2 n)$ rounds of \ncclique, \whp,  in such a way that as a result:
  \begin{itemize}
  \item all edges belonging to the \MSF of $H$ are known by at least one endpoint,
  \item Borůvka's decomposition of $H$ is known.
  \end{itemize}
  
\end{theorem}

\begin{proof}
The algorithm consists of phases. At the beginning of each phase we have the following invariants:
\begin{enumerate}
  \item \label{prop:1} each node knows the leader of its component (ID of some fixed node)
  \item each node knows the list of all MWOEs outgoing from it's component, such that they were used for merging in some previous phase
\end{enumerate}

At the beginning of each phase we identify, for each connected component in the current spanning forest, the MWOE. On the top level, the algorithm identifying MWOEs consists of two steps: in the first we identify for each edge, which nodes are the leaders of components on both endpoints of this edge, in the second we identify a MWOE for each component simultaneously via sorting algorithm from Corollary \ref{c:sorting}. 

The first part is as follows. Each node makes two ordered triplets for each edge: for edge $\set{u,v}$ of weights $w$ the pairs are $\langle u, v, w\rangle$ and $\langle v, u, w\rangle$. Then we sort all those pairs lexicographically. Let consider the first coordinate of those pairs (the other can be done in exactly same way). As a result of sorting, each node has some set of $\bigO(\log n)$ pairs. Let consider a single node, let $x_b$ be the first coordinate of a pair that is the smallest among pairs known to the node, and $x_e$ be the first coordinate of the largest. As long as $x_b \not = x_e$ we distinguish three kinds of pairs:
\begin{enumerate}
  \item \label{p:left} triplets $\langle x,y,w\rangle$, such that $x = x_b$
  \item \label{p:middle} triplets $\langle x,y,w\rangle$, such that $x_b < x < x_e$
  \item \label{p:right} triplets $\langle x,y,w\rangle$, such that $x = x_e$
\end{enumerate}

For all triplets of kind \ref{p:middle}, we can identify ID of a leader of a component that contains a node that is first coordinate of such a pair, in $\bigO(1)$ rounds. A node $v$ that holds a pair $\langle x,y\rangle$ can ask a node $x$ about the ID of a leader. Since the set of triplets was sorted, and $x$ was neither the largest or the smallest among first coordinates, only $v$ asks about the ID of a leader of a component containing $x$. Therefore, the node $x$ receives only one message. Since each node has $\bigO(\log n)$  triplets, each node has to ask only $\bigO(\log n)$ different nodes. Therefore, this communication scheme can be executed in $\bigO(1)$ rounds via Theorem \ref{t:routing}.

Then, we can identify ID of all other first coordinates in $\bigO(\log n / \log \log n)$ rounds. If we consider only triplets such that we did not identified a leader for an endpoint yet, then a node has up to two different first coordinates among all remaining triplets. We can treat nodes with two different first coordinates as two virtual node. Thus, each (possibly virtual) node has all triplets with the same first coordinate. Since for each node $v$ we have a disjoint set of nodes that has to receive the ID of a leader of a component containing $v$, it can be done via Theorem \ref{t:multicast} in $\bigO(\log n / \log \log n)$ rounds.

We can do exactly the same procedure in order to identify the ID of a leader of a component containing the second coordinate for all triplets. Then, we sort all triplets such that first and second coordinates are different (edges between components) by the first and third coordinate. This way, for each component we can identify the lightest outgoing edge.

At this point, we have a set of MWOEs, and we want to use  them in order to perform merging phase. Since the set of MWOE edges is evenly distributed, and there are only $\bigO(n)$ such edges, we can use \EP Connected Components algorithm in order to compute a new division into connected components. Since for each old component its root knows the new component ID, using the multicast protocol we can announce it to all nodes in the network in additional $\bigO(\log n / \log \log n)$ rounds.

Since the Borůvka's algorithm requires $\bigO(\log n)$ phases, and we provided an implementation of a single phase that requires $\bigO(\log n)$ rounds, \whp, the whole algorithm requires $\bigO(\log^2 n)$ rounds, \whp.
\end{proof}

The proposed algorithm finds a minimum spanning forest in such a way that each edge $e$ is known by the node in the component that has $e$ as a MWOE. Furthermore, each node knows up to $\bigO(\log n)$ edges, as there is at most one MWOE for a node, in each phase of the algorithm.

\subsection{Finding the $F$-light edges}
\label{s:f_light_edges}
The main challenge in the \ncclique implementation of the KKT Sampling approach is to provide an efficient procedure finding the set of $F$-light edges. In this section we prove the following Theorem:
\begin{theorem}
\label{t:f_light_edges}
Given a spanning forest $F$ of a random subgraph of $G$, sampled with probability $1/ \log n$ using $k$-wise independent random hash functions, for some $k \geq 9\log n$, such that its Borůvka's decomposition is known, it is possible to find the set of $F$-light edges in $G$, using $\bigO(\log^2 n)$ rounds of \ncclique, \whp.
\end{theorem}

The remaining part of this subsection is a proof of Theorem \ref{t:f_light_edges}. The approach is similar to the one provided by Pemmaraju and Sardeshmukh \cite{Pemmaraju_2016_MST_o(m)}, however we have to provide some adjustments required for the \ncclique model. 

Let $C^i = \set{C^i_1, C^i_2, \dots, C^i_k }$ denotes the set of connected components at the beginning of the $i$th phase of Borůvka's algorithm. Let $e^i_j$ be the MWOE for the component $C^i_j$, if it was used to merge this component. Let $L^i_j$ be the set of the edges outgoing from $C^i_j$, of weight smaller than $w(e^i_j)$. We call $L^i_j$ a set of \emph{light edges} of component $C^i_j$. The proof of Theorem \ref{t:f_light_edges} is divided into two parts:
\begin{itemize}
\item in the first part we discuss some properties of sets $L^i_j$
\item in the second part we show how to use them together with sparse recovery techniques in order to compute $F$-light edges
\end{itemize}
\paragraph{Properties of \emph{light edges}}
Firstly we recall the result of Pemmaraju and Sardeshmukh \cite{Pemmaraju_2016_MST_o(m)}: we can find the set of $F$-light edges by identifying the edges in the set $L$. More precisely, they provide the following Lemma:
\begin{lemma} \cite{Pemmaraju_2016_MST_o(m)}
\label{l:f_light_edges}
The set $L=\bigcup\limits_i\bigcup\limits_jL^i_j$ contains all $F$-light edges.
\end{lemma}

Now, we provide that all those sets are sufficiently small in order to use $\bigO(\log n)$-\emph{edge recovery} technique in order to identify them in $\bigO(\log^2 n)$ rounds. More precisely, we show the following:
\begin{lemma}
\label{l:small}
$\prob{\exists i,j \ .\ |L^i_j| \geq 24\log n} \leq 1 / \Theta(n^2)$.
\end{lemma}

\begin{proof}
  The proof Lemma \ref{l:small}, similarly as Pemmaraju and Sardeshmukh \cite{Pemmaraju_2016_MST_o(m)}, we use the concentration bounds for $k$-wise independent random variables, by Schmidt et al. \cite{Schmidt_1995_dependent_chernoff}, and Pettie and Ramachandran \cite{Pettie_2008_dependent_expected_value}.

  \begin{lemma} \cite{Schmidt_1995_dependent_chernoff}
  \label{l:chernoff}
  Let $X_1 , X_2 , \dots, , X_n$ be a sequence of random $k$-wise independent $0-1$ random variables with $X = \sum \limits_{i=1}^n X_i$. If $k \geq 2$ is even and $C \geq \expv{X}$ then:
  $$ \prob{|X - \expv{X}| \geq T} \leq \left[\sqrt{2}\ \mathsf{cosh}\left(\sqrt{k^3/36C} \right) \right]\cdot \left( \frac{kC}{eT^2} \right)^{k/2} $$ 
  \end{lemma}  

  More precisely we use Lemma \ref{l:chernoff} in the following special case.

  \begin{corollary}
  \label{c:chernoff_2}
  Let $X_1 , X_2 , \dots, , X_n$ be a sequence of random $9 \log n$-wise independent $0-1$ random variables with $X = \sum \limits_{i=1}^n X_i$, such that $\expv{X}<12 \log n$ 
  $$\prob{|X - \expv{X}| > 12\log n} < 1 / \Theta(n^3) $$
  \end{corollary}
  \begin{proof}
  We can apply Lemma \ref{l:chernoff} with $T = C =  12 \log n$,

  \noindent$\prob{|L_i| - \expv{L_i}| \geq 12\log n} \leq $
    
  \noindent$\leq \left[\sqrt{2}\ \mathsf{cosh}\left(\sqrt{(9\log n)^3/ (36 \cdot 12\log n)} \right) \right]\cdot \left( \frac{12\log n \cdot 9\log n}{e \cdot 144\log^2 n} \right)^{\frac{9\log n}{2}} =$

  \noindent$=\left[\sqrt{2}\ \mathsf{cosh} \left( \frac{9\log n} {4\sqrt{3}} \right) \right] \cdot \left(\frac{3}{4e} \right)^{\frac{9\log n}{2}} \leq \left[\frac{1}{\sqrt{2}} \mathsf{cosh}\left(\frac{9\log n}{4\sqrt{3}}\right)\right] \cdot e^{-\frac{9\log n}{2}} =$

  \noindent$=\frac{1} {\sqrt{2}} \left(e^{\frac{9\log n}{4\sqrt{3}}} + e^{-\frac{9\log n}{4\sqrt{3}}} \right) \cdot \left( e^{-\frac{9\log n}{2}} \right) = \frac{1}{\sqrt{2}} \left( e^{\frac{-36\sqrt{3}+9}{4\sqrt{3}}\log n} + e^{\frac{-36\sqrt{3}-9}{4\sqrt{3}}\log n}\right) < \frac{1}{ \Theta(n^3)}$

  \end{proof}

  \begin{lemma}\cite{Pettie_2008_dependent_expected_value}
  \label{l:expected}
   Let $\chi$ be a set of $n$ totally ordered elements and $\chi_p$ be a subset of $\chi$, derived by sampling each element with probability $p$ using a $k$-wise independent sampler. Let $Z$ be the number of unsampled elements less than the smallest element in $\chi_p$. Then $\expv{Z} \leq p^{-1} (8(\pi / e)^2 + 1)\text{ for }k \geq 4$.
  \end{lemma}
  
  By Lemma \ref{l:expected}, we have that the expected number of the edges in $L^i_j$ is no larger than $(8(\pi / e)^2 + 1) \log n < 12 \log n$. Thus, we can use Corollary \ref{c:chernoff_2} to get that the probability of $|L^i_j| > 12 \log n$ for some $i,j$ is smaller than $1 / \Theta(n^3)$. Again, by the union bound we have that $\prob{\exists i,j \ .\ |L^i_j| \geq 24\log n} \leq 1 / \Theta(n^2)$, which proves Lemma \ref{l:small} correct.
\end{proof}

\paragraph{Detection of the $F$-light edges via sparse recovery}
Since each set $L^i_j$ contains only edges lighter than the MWOE of $C^i_j$. By the Lemma \ref{l:small}, $|L^i_j| \in \bigO(\log n)$. Moreover, since \emph{Borůvka's decomposition} of $F$ \emph{is known}, each node for each phase knows
\begin{itemize}
\item the MWOEs of its component 
\item ID of a leader of its component 
\end{itemize}
Since for a fixed value of $i$ all sets $C^i_j$ are disjoint, by Theorem \ref{t:k_recovery} we can recover the sets $L^i_j$ for all $j$ and some fixed $i$ simultaneously in $\bigO(\log n)$ rounds, \whp. Since our implementation of Borůvka's algorithm has $\bigO(\log n)$ phases, we can identify all the edges in $L = \bigcup\limits_i\bigcup\limits_j L^i_j$ in $\bigO(\log^2 n)$ rounds, \whp. By Lemma \ref{l:f_light_edges} $L$ contains all $F$-light edges, therefore Theorem \ref{t:f_light_edges} is correct.

\subsection{\MSF for $F$ extended by $F$-light edges}
At this point the representation of $F$ extended by the $F$-light edges is almost evenly distributed. Since $F$ was computed by sequence of $\bigO(\log n)$ simulations of \EP \SF algorithm, each node has at most $\bigO(\log n) \cdot \bigO(\log n) = \bigO(\log^2 n)$ edges of $F$. Moreover, each node knows up to $\bigO(\log n)$ $F$-light edges per single phase of Borůvka's algorithm, hence $\bigO(\log n) \cdot \bigO(\log n) = \bigO(\log^2 n)$ $F$-light edges in total. Therefore, we have the following problem -- total number of edges is $\bigO(n) + \bigO(n) \cdot \bigO(\log n) = \bigO(n \log n)$, and the edges are distributed in such a way that each node holds $\bigO(\log^2 n)$ of them. Our goal is to execute some load balancing procedure, such that after the execution all edges are evenly distributed among the nodes (each node holds $\bigO(\log n)$ edges).

To do so, we use virtual nodes and the approach presented while introducing communication protocols. In particular we can assign to a node holding $k$ keys, $\Theta(k / \log n)$ virtual \emph{helper} nodes. It can be done in $\bigO(\log n / \log \log n)$ rounds, and as a result, each node knows the set of IDs of its helper nodes. Then, each node can evenly partition the keys (edges) stored in the local memory, and send $\Theta(\log n)$ keys to each helper node. As we have guarantee that each single node knows $\bigO(\log^2 n)$ edges, each node requires only $\bigO(\log n)$ helper nodes. Therefore, it can send a message to all of them in $\bigO(1)$ rounds. Thus, we can simply send the keys to the helper nodes one by one, in $\bigO(\log n)$ rounds.

Since there are $\bigO(n \log n)$ edges in total, and each helper node holds $\Theta(\log n)$ keys, we have only $\bigO(n)$ virtual nodes. 

\subsection{Summary of the Minimum Spanning Forest algorithms}
In Section \ref{s:mst}, we provided an analysis of the Spanning Forest and Minimum Spanning Forest algorithms relying on a recursive application of the KKT Sampling approach, with probability of sampling $p = 1 / \log n$, using $9\log n$-wise independent random variables. In particular we proved that it is possible to:
\begin{itemize}
\item identify a minimum spanning forest and Borůvka's decomposition of a graph with $\bigO(n \log n)$ evenly distributed edges in $\bigO(\log^2 n)$ rounds, \whp. (subsection \ref{s:boruvka})
\item identify a set of $F$ light edges on each level of recursion, in $\bigO(\log^2 n)$ rounds, \whp (subsection \ref{s:f_light_edges})
\item reduce a problem of finding a minimum spanning forest of $F$ extended by the set of $F$-light edges to a problem of finding a minimum spanning forest on a graph with $\bigO(n \log n)$ evenly distributed edges.
\end{itemize}

Therefore, we can execute whole algorithm in $\bigO\paren{\ceil{\frac{\log \Delta}{\log \log n}}} \cdot \bigO(\log^2 n)$ rounds, which proves Theorem \ref{t:mst} correct.
\end{proof}

\section{Conclusions}
Even though the input size is significantly larger than the size of communication allowed in a single round, we still can apply some results from the \MPC models to the \ncclique. The key part is to provide a reduction of the problem on general graph into some subproblems of size comparable to the size of communication. 

In the case of the Minimum Spanning Tree, the KKT Sampling technique turned out to generate subproblems with a structure that allowed an efficient implementation in the \ncclique model. It remains an open question whether any other filtering / sampling techniques that proved to be efficient for some different problems, can be implemented in the \ncclique model.

\bibliographystyle{abbrv} 
\bibliography{ref}

\begin{thebibliography}{10}

\bibitem{Adler_1998_BSP_MTS}
M.~Adler, W.~Dittrich, B.~Juurlink, M.~Kutyłowski, and I.~Rieping.
\newblock Communication-optimal parallel minimum spanning tree algorithms
  (extended abstract).
\newblock In {\em Proceedings of the Tenth Annual ACM Symposium on Parallel
  Algorithms and Architectures}, SPAA '98, pages 27--36. ACM, 1998.

\bibitem{Ahn_2012_sketches}
K.~J. Ahn, S.~Guha, and A.~McGregor.
\newblock Analyzing graph structure via linear measurements.
\newblock In {\em Proceedings of the Twenty-third Annual ACM-SIAM Symposium on
  Discrete Algorithms}, SODA '12, pages 459--467. Society for Industrial and
  Applied Mathematics, 2012.

\bibitem{Andoni_2018_CC_MPC}
A.~Andoni, C.~Stein, Z.~Song, Z.~Wang, and P.~Zhong.
\newblock Parallel graph connectivity in log diameter rounds.
\newblock {\em CoRR}, abs/1805.03055, 2018.

\bibitem{Augustine_2018_NCC}
J.~Augustine, M.~Ghaffari, R.~Gmyr, K.~Hinnenthal, F.~Kuhn, J.~Li, and
  C.~Scheideler.
\newblock Distributed computation in the node-congested clique.
\newblock {\em CoRR}, abs/1805.07294, 2018.

\bibitem{Chitnis_2013_CC_MR}
L.~Chitnis, A.~Das~Sarma, A.~Machanavajjhala, and V.~Rastogi.
\newblock Finding connected components in map-reduce in logarithmic rounds.
\newblock In {\em Proceedings of the 2013 IEEE International Conference on Data
  Engineering (ICDE 2013)}, ICDE '13, pages 50--61, 2013.

\bibitem{cormode_2014_sampling}
G.~Cormode and D.~Firmani.
\newblock A unifying framework for \&\#8467;0-sampling algorithms.
\newblock {\em Distrib. Parallel Databases}, 32(3):315--335, Sept. 2014.

\bibitem{Czumaj_2018_MPC}
A.~Czumaj, J.~\L\k{a}cki, A.~M\k{a}dry, S.~Mitrovi\'{c}, K.~Onak, and
  P.~Sankowski.
\newblock Round compression for parallel matching algorithms.
\newblock In {\em Proceedings of the 50th Annual ACM SIGACT Symposium on Theory
  of Computing}, STOC 2018, pages 471--484, 2018.

\bibitem{Ghaffari_2018_MPC}
M.~Ghaffari, T.~Gouleakis, S.~Mitrovic, and R.~Rubinfeld.
\newblock Improved massively parallel computation algorithms for mis, matching,
  and vertex cover.
\newblock {\em CoRR}, abs/1802.08237, 2018.

\bibitem{Ghaffari_2016_planar}
M.~Ghaffari and B.~Haeupler.
\newblock Distributed algorithms for planar networks ii: Low-congestion
  shortcuts, mst, and min-cut.
\newblock In {\em Proceedings of the Twenty-Seventh Annual ACM-SIAM Symposium
  on Discrete Algorithms}, pages 202--219, 2016.

\bibitem{Ghaffari_2017_mixing}
M.~Ghaffari, F.~Kuhn, and H.-H. Su.
\newblock Distributed mst and routing in almost mixing time.
\newblock In {\em Proceedings of the ACM Symposium on Principles of Distributed
  Computing}, PODC '17, pages 131--140, 2017.

\bibitem{Ghaffari_2016_MST_log*}
M.~Ghaffari and M.~Parter.
\newblock Mst in log-star rounds of congested clique.
\newblock In {\em the Proc.\ of the Int'l Symp.\ on Princ.\ of Dist.\ Comp.\
  (PODC)}, 2016.

\bibitem{Gmyr_2017_Hybrid}
R.~Gmyr, K.~Hinnenthal, C.~Scheideler, and C.~Sohler.
\newblock {Distributed Monitoring of Network Properties: The Power of Hybrid
  Networks}.
\newblock In {\em 44th International Colloquium on Automata, Languages, and
  Programming (ICALP 2017)}, volume~80 of {\em Leibniz International
  Proceedings in Informatics (LIPIcs)}, pages 137:1--137:15, 2017.

\bibitem{Goodrich_1999_Sorting}
M.~Goodrich.
\newblock Communication-efficient parallel sorting.
\newblock {\em SIAM Journal on Computing}, 29(2):416--432, 1999.

\bibitem{halperin_1996_SF_PRAM}
S.~Halperin and U.~Zwick.
\newblock An optimal randomised logarithmic time connectivity algorithm for the
  erew pram.
\newblock {\em Journal of Computer and System Sciences}, 53(3):395 -- 416,
  1996.

\bibitem{Hegeman_2015_MST_logloglog}
J.~W. Hegeman, G.~Pandurangan, S.~V. Pemmaraju, V.~B. Sardeshmukh, and
  M.~Scquizzato.
\newblock Toward optimal bounds in the congested clique: Graph connectivity and
  {MST}.
\newblock In {\em the Proc.\ of the Int'l Symp.\ on Princ.\ of Dist.\ Comp.\
  (PODC)}, pages 91--100. ACM, 2015.

\bibitem{Jowhari_2011_recovery}
H.~Jowhari, M.~Sa\u{g}lam, and G.~Tardos.
\newblock Tight bounds for lp samplers, finding duplicates in streams, and
  related problems.
\newblock In {\em Proceedings of the Thirtieth ACM SIGMOD-SIGACT-SIGART
  Symposium on Principles of Database Systems}, PODS '11, pages 49--58. ACM,
  2011.

\bibitem{Jurdzinski_2018_MST_O(1)}
T.~Jurdzi\'{n}ski and K.~Nowicki.
\newblock Mst in o(1) rounds of congested clique.
\newblock In {\em SODA '18}, pages 2620--2632, 2018.

\bibitem{Karger_1995_MST}
D.~R. Karger, P.~N. Klein, and R.~E. Tarjan.
\newblock A randomized linear-time algorithm to find minimum spanning trees.
\newblock {\em J. ACM}, 42(2):321--328, Mar. 1995.

\bibitem{Karloff_2010_MapReduce}
H.~Karloff, S.~Suri, and S.~Vassilvitskii.
\newblock A model of computation for mapreduce.
\newblock In {\em Pro.\ of ACM-SIAM Symp.\ on Disc.\ Alg.\ (SODA)}, pages
  938--948, 2010.

\bibitem{Kiveris_2014_CC_MR}
R.~Kiveris, S.~Lattanzi, V.~Mirrokni, V.~Rastogi, and S.~Vassilvitskii.
\newblock Connected components in mapreduce and beyond.
\newblock In {\em Proceedings of the ACM Symposium on Cloud Computing}, SOCC
  '14, pages 18:1--18:13, 2014.

\bibitem{Klauck_2015_KMachine}
H.~Klauck, D.~Nanongkai, G.~Pandurangan, and P.~Robinson.
\newblock Distributed computation of large-scale graph problems.
\newblock In {\em Proceedings of the Twenty-sixth Annual ACM-SIAM Symposium on
  Discrete Algorithms}, SODA '15, pages 391--410, 2015.

\bibitem{Korhonen_2016_MST_det}
J.~H. Korhonen.
\newblock Deterministic mst sparsification in the congested clique.
\newblock {\em arXiv preprint arXiv:1605.02022}, 2016.

\bibitem{Lattanzi_2011_filtering}
S.~Lattanzi, B.~Moseley, S.~Suri, and S.~Vassilvitskii.
\newblock Filtering: a method for solving graph problems in mapreduce.
\newblock In {\em the Proceedings of the Symposium on Parallel Algorithms and
  Architectures}, pages 85--94, 2011.

\bibitem{Lenzen_2013_Routing}
C.~Lenzen.
\newblock Optimal deterministic routing and sorting on the congested clique.
\newblock In {\em the Proc.\ of the Int'l Symp.\ on Princ.\ of Dist.\ Comp.\
  (PODC)}, pages 42--50, 2013.

\bibitem{Lotker_2005_MST}
Z.~Lotker, B.~Patt-Shamir, E.~Pavlov, and D.~Peleg.
\newblock Minimum-weight spanning tree construction in {O}($\log \log n$)
  communication rounds.
\newblock {\em SIAM Journal on Computing}, 35(1):120--131, 2005.

\bibitem{Boruvka}
J.~Ne{\v{s}}et{\v{r}}il, E.~Milkov{\'a}, and H.~Ne{\v{s}}et{\v{r}}ilov{\'a}.
\newblock Otakar boruvka on minimum spanning tree problem translation of both
  the 1926 papers, comments, history.
\newblock {\em Discrete Mathematics}, 233(1):3--36, 2001.

\bibitem{Pemmaraju_2016_MST_o(m)}
S.~V. Pemmaraju and V.~B. Sardeshmukh.
\newblock {Super-Fast MST Algorithms in the Congested Clique Using o(m)
  Messages}.
\newblock In {\em FSTTCS 2016}, pages 47:1--47:15, 2016.

\bibitem{Pettie_2008_dependent_expected_value}
S.~Pettie and V.~Ramachandran.
\newblock Randomized minimum spanning tree algorithms using exponentially fewer
  random bits.
\newblock {\em ACM Trans. Algorithms}, 4(1):5:1--5:27, Mar. 2008.

\bibitem{Roughgarden_2016_MPC}
T.~Roughgarden, S.~Vassilvitskii, and J.~R. Wang.
\newblock Shuffles and circuits: (on lower bounds for modern parallel
  computation).
\newblock In {\em Proceedings of the 28th ACM Symposium on Parallelism in
  Algorithms and Architectures}, SPAA '16, pages 1--12. ACM, 2016.

\bibitem{Schmidt_1995_dependent_chernoff}
J.~P. Schmidt, A.~Siegel, and A.~Srinivasan.
\newblock Chernoff-hoeffding bounds for applications with limited independence.
\newblock {\em SIAM J. Discret. Math.}, 8(2):223--250, May 1995.

\bibitem{Valiant_1990_BSP}
L.~G. Valiant.
\newblock A bridging model for parallel computation.
\newblock {\em Commun. ACM}, 33(8):103--111, Aug. 1990.

\end{thebibliography}

\end{document}